\definecolor{Blue}{rgb}{0.1,0.1,0.8}
\newtheorem{definition}{Definition}
\newtheorem{theorem}{Theorem}[section]
\newtheorem{lemma}[theorem]{Lemma}
\newcommand{\Max}{\mathrm{Max}}
\title{Almost Instance-optimal Clipping for Summation Problems in the Shuffle Model of Differential Privacy}
\date{}
\author{
Wei Dong\thanks{Carnegie Mellon University. {\tt wdong2@andrew.cmu.edu}. }
\and
Qiyao Luo\thanks{Hong Kong University of Science and Technology. {\tt qluoak@cse.ust.hk}. }
\and
Giulia Fanti\thanks{Carnegie Mellon University. {\tt gfanti@andrew.cmu.edu}. }
\and
Elaine Shi\thanks{Carnegie Mellon University. {\tt runting@gmail.com}. }
\and
Ke Yi\thanks{Hong Kong University of Science and Technology. {\tt yike@cse.ust.hk}. }
}
\begin{document}
\maketitle 

\begin{abstract}
Differentially private mechanisms achieving worst-case optimal error bounds (e.g., the classical Laplace mechanism) are well-studied in the literature. However, when typical data are far from the worst case, \emph{instance-specific} error bounds---which depend on the largest value in the dataset---are more meaningful. For example, consider the sum estimation problem, where each user has an integer $x_i$ from the domain $\{0,1,\dots,U\}$ and we wish to estimate $\sum_i x_i$. This has a worst-case optimal error of $O(U/\varepsilon)$, while recent work has shown that the clipping mechanism can achieve an instance-optimal error of $O(\max_i x_i \cdot \log\log U /\varepsilon)$. Under the shuffle model, known instance-optimal protocols are less communication-efficient. The clipping mechanism also works in the shuffle model, but requires two rounds: Round one finds the clipping threshold, and round two does the clipping and computes the noisy sum of the clipped data. In this paper, we show how these two seemingly sequential steps can be done simultaneously in one round using just $1+o(1)$ messages per user, while maintaining the instance-optimal error bound. We also extend our technique to the high-dimensional sum estimation problem and sparse vector aggregation (a.k.a. frequency estimation under user-level differential privacy).
\end{abstract}

\newpage

\section{Introduction}
\label{sec:intro}

The \textit{shuffle model} \cite{bittau2017prochlo,cheu2019distributed,erlingsson2019amplification} of \textit{differential privacy} (DP) is widely-studied in the context of DP computation over distributed data. 
The model has 3 steps: (1) Each client uses a randomizer $\mathcal{R}(\cdot)$ to privatize their data.
(2) A trusted shuffler randomly permutes the inputs from each client and passes them to an untrusted analyzer, which (3) conducts further analysis.  
Unlike the \textit{central model} of DP, where a trusted curator has access to all the data, the shuffle model provides stronger privacy protection by removing the dependency on a trusted curator.  Unlike the \textit{local model}, where clients send noisy results to the analyzer directly,
the addition of the trusted shuffler allows for a significantly improved privacy-accuracy tradeoff.  For problems like bit counting, shuffle-DP achieves an error of $O(1/\varepsilon)$ with constant probability\footnote{In Section \ref{sec:intro}, all stated error guarantees hold with constant probability.  We will make the dependency on the failure probability $\beta$ more explicit in later sections.}~\cite{ghazi2020private}, matching the best error of central-DP, while the smallest error achievable under local-DP is $O(\sqrt{n}/\varepsilon)$~\cite{bonawitz2017practical,damgaard2016unconditionally}.

The \textit{summation} problem, a fundamental problem with applications in statistics~\cite{kamath2019privately,biswas2020coinpress,huang2021instance}, data analytics~\cite{dong2022r2t,tao2020computing}, and machine learning such as the training of deep learning models~\cite{song2013stochastic,bassily2014private,abadi2016deep,agarwal2018cpsgd} and clustering algorithms~\cite{stemmer2021locally,stemmer2018differentially}, has been studied in many works under the shuffle model~\cite{balle2019privacy,cheu2019distributed,ghazi2019scalable,ghazi2020privateb,balle2020private,ghazi2021differentially}.
In this problem, each user $i \in [n]:=\{1,\dots,n\}$ holds an integer $x_i \in \{0,1,\dots,U\}$ and the goal is to estimate $\mathrm{Sum}(D):=\sum_i x_i$, where $D=(x_1,\dots,x_n)$. 
All of these works for sum estimation under shuffle-DP focus on achieving an error of $O(U/\varepsilon)$.  Such an error can be easily achieved under central-DP, where the curator releases the true sum after masking it with a Laplace noise of scale $U/\varepsilon$.  
In the shuffle-DP model, besides error, another criterion that should be considered is the communication cost.  The recent shuffle-DP summation protocol of \cite{ghazi2021differentially} both matches the error of central-DP and achieves optimal communication. More precisely, it achieves an error that is just $1+o(1)$ times that of the Laplace mechanism, while each user just sends in expectation $1+o(1)$ messages, each of a logarithmic number of bits.

However, in real applications (as well as in \cite{ghazi2021differentially}), $U$ must be set independently of the dataset; to account for all possible datasets, it should be conservatively large.  For instance, if we only know that the $x_i$'s are 32-bit integers, then $U=2^{32}-1$.   Then the error of $O(U/\varepsilon)$ could dwarf the true sum for most datasets.  Notice that sometimes some prior knowledge is available, and then a smaller $U$ could be used.  For example, if we know that the $x_i$'s are people's incomes, then we may set $U$ as that of the richest person in the world.  Such a $U$ is still too large for most datasets as such a rich person seldom appears in most datasets.

\textbf{Instance-Awareness.~}
The earlier error bound of $O(U/\varepsilon)$ can be shown to be optimal, but only in the worst case.
When typical input data are much smaller than $U$, an \emph{instance-specific} mechanism (and error bound) can be obtained---i.e., a mechanism whose error depends on the largest element of the dataset.
In the example of incomes above, an instance-aware mechanism would achieve an error proportional to the actual maximum income in the dataset.
This insight has recently been explored under the central model of DP \cite{andrew2019differentially, pichapati2019adaclip, mcmahan2017learning,huang2021instance,fang2022shifted,dong2022r2t}. 

A widely used technique for achieving instance-specific error bounds under central-DP is the \textit{clipping mechanism} \cite{andrew2019differentially, pichapati2019adaclip, mcmahan2017learning,huang2021instance}.  For some $\tau$, each $x_i$ is clipped to $\mathrm{Clip}(x_i,\tau) := \min(x_i,\tau)$.  Then we compute the sum of $\mathrm{Clip}(D,\tau) := \big(\mathrm{Clip}(x_i,\tau) \mid i =1,\dots,n\big)$ and add a Laplace noise of scale $O(\tau/\varepsilon)$.  Note that  the clipping introduces a (negative) bias of magnitude up to $\Max(D) \cdot |\{i \in [n] \mid x_i>\tau\}|$, where 
$\mathrm{Max}(D) := \max_i x_i$.
Thus, one should choose a good clipping threshold $\tau$ that balances the DP noise and bias.  Importantly, this must be done in a DP fashion, and this is where all past investigations on the clipping mechanism have been devoted. In the central model, the best error bound achievable is  $O\big(\mathrm{Max}(D)\cdot \log\log U/\varepsilon\big)$ \cite{dong2023universal}.\footnote{\cite{dong2023universal} achieves an error of $O(\mathrm{Max}(D)\log\log(\mathrm{Max}(D)))$ rather than the cited $O(\mathrm{Max}(D)\log\log U))$. The $\log\log(\mathrm{Max}(D))$ result is more meaningful for the unbounded domain setting where $U=\infty$, but in the shuffle-DP model, there is currently no known method can handle the unbounded domain case for any problem, including sum estimation. Our proposed mechanism also only supports the bounded domain case. Therefore, for simplicity, we ignored this minor difference and just cited the $\log\log U$ result.} For the real summation problem, such an error bound is considered (nearly) instance-optimal, since any DP mechanism has to incur an error of $\Omega(\mathrm{Max}(D))$ on either $D$ or $D-\{\mathrm{Max}(D)\}$ \cite{vadhan2017complexity}. 
The factor of $\log\log U/\varepsilon$ is known as the ``optimality ratio''.
It has been shown that the optimality ratio $O(\log\log U / \varepsilon)$ is the best possible in the case of $\delta=0$ ($\delta$ is a privacy parameter, see Section~\ref{sec:dp} for more details) \cite{dong2023universal}. Under the case of $\delta>0$, it is still an open question whether that ratio is optimal or not.
Notably, in the literature~\cite{asi2020instance,dong2022r2t,huang2021instance}, a polylog optimality ratio is often considered satisfactory enough and is named as ``instance-optimal'' and so far no known DP mechanism has a better optimality ratio.

As suggested in \cite{huang2021instance}, the clipping mechanism can be easily implemented in the shuffle model as well, but requiring two rounds.  The first round finds $\tau$.  Then we broadcast $\tau$ to all users.  In the second round, we invoke a summation protocol, e.g., the one in \cite{ghazi2021differentially}, on $\mathrm{Clip}(D,\tau)$. Two-round protocols are generally undesirable, not only because of the extra latency and coordination overhead, but also because they leak some information to the users ($\tau$ in this case, which is an approximation of $\Max(D)$). Note that the shuffle model, in its strict definition, only allows one-way messages from users to the analyzer (through the shuffler), so the users should learn nothing from each other.  Moreover, the central-DP mechanism for finding the optimal $\tau$ \cite{dong2023universal} does not work in the shuffle model.  Instead,  \cite{huang2021instance} uses the complicated range-counting protocol of \cite{ghazi2021power}.  This results in a sum estimation protocol that runs in two rounds, having an error of $\tilde{O}(\Max(D) \cdot \log^{3.5} U \sqrt{\log(1/\delta)})$\footnote{For any function $f$, $\tilde{O}(f):= f\cdot \mathrm{polylog}(f)$.}, and sends  $\tilde{O}\big(\log^{6}U\log(1/\delta)/\varepsilon\big)$ messages per user. 
Thus, this protocol is of only theoretical interest; indeed, no experimental results are provided in \cite{huang2021instance}.  

\textbf{Problem Statement.~}
Does there exist a practical, \textit{single-round} protocol for sum estimation under shuffle-DP that simultaneously: (1) matches the optimal central-DP error of $O(\Max(D) \cdot \log\log U / \varepsilon)$, and (2) requires $1+o(1)$ messages per user? 

\subsection{Our results}
\label{sec:results}


We answer this question in the affirmative, by presenting a new single-round shuffle-DP clipping protocol for the sum estimation problem.
At the core of our protocol is a technique that finds the optimal $\tau$ and computes the noisy clipped sum using $\tau$ at the same time.  This appears impossible, as the second step relies on the information obtained from the first. 
Our idea is to divide the data into a set of disjoint parts and do the estimations for each part independently.  This ensures we only pay the privacy and communication cost of one since each element will only be involved in one estimation. 
Based on these estimations, we can compute the noisy clipped sums for all the clipping thresholds $\tau=1,2,4,\dots, U$.  Meanwhile, we show that these noisy estimations already contain enough information to allow us to decide which $\tau$ is the best.
Besides solving sum aggregation, we show that using this protocol as a building block or deriving a variant of this idea can achieve state-of-the-art privacy-utility-communication tradeoffs for two other important summation problems.





\begin{table*}[ht]
\centering
\resizebox{1\columnwidth}{!}{
	\renewcommand\arraystretch{1.2}
\begin{tabular}{c|c|c|c|c}
\hline
\multicolumn{3}{c|}{Mechanism} & Error & Average messages sent by each user 
\\
\hline
\hline
\multirow{9}{*}{\makecell[c]{$1$-D \\ Sum}} & \multirow{5}{*}{\makecell[c]{Prior\\ work}} & \cite{ghazi2021differentially} & $O\big(U/\varepsilon\big)$ & $1+o(1)$
\\
\cline{3-5} 
& & \cite{balle2020private} &$O\big(U/\varepsilon\big)$ & $O(1)$
\\
\cline{3-5} 
& & \multirow{3}{*}{\cite{huang2021instance} + \cite{ghazi2021differentially} + \cite{ghazi2021power}} & \multirow{3}{*}{\makecell[c]{$\tilde{O}\big(\mathrm{Max}(D)\cdot\log^{3.5} U$\\$\sqrt{\log(1/\delta)}/\varepsilon\big)$}} & \multirow{3}{*}{\makecell[c]{\textbf{Round} $1$: $\tilde{O}\big(\log^6 U\cdot \log\big(1/\delta\big)/\varepsilon\big)$ \\ \textbf{Round} $2$: $1+o(1)$}}
\\
& & &  & 
\\
& & &  & 
\\
\cline{2-5}
&\multirow{2}{*}{\makecell[c]{Our\\ result}} & Theorem & \multirow{2}{*}{\makecell[c]{$O\big(\mathrm{Max}(D)\cdot\log\log U/\varepsilon\big)$}} & \multirow{2}{*}{\makecell[c]{$1+o(1)$}}
\\
& & \ref{th:sum} & &
\\
\hhline{|~|----|}
& \multicolumn{2}{c|}{Best result under} & \multirow{2}{*}{$O\big(\mathrm{Max}(D)\cdot \log\log U/\varepsilon\big)$} & \cellcolor[gray]{0.9}
\\
& \multicolumn{2}{c|}{central model~\cite{dong2023universal}} & &\cellcolor[gray]{0.9}
\\
\hline
\multirow{11}{*}{\makecell[c]{$d$-D \\ Sum}} & \multirow{5}{*}{\makecell[c]{Prior\\ work}} & \multirow{2}{*}{\cite{huang2021instance}} & \multirow{2}{*}{\makecell[c]{$\tilde{O}\big(U_{\ell_2}\sqrt{d\log(n)\log(1/\delta)}/\varepsilon\big)$}} & \multirow{2}{*}{\makecell[c]{$d+\tilde{O}(d^{1.5}\log^{1.5}(1/\delta)/(\varepsilon\sqrt{n})\big)$}}
\\
& & & &
\\
\cline{3-5}
& & \multirow{3}{*}{\cite{huang2021instance} + \cite{ghazi2021differentially} +\cite{ghazi2021power}} & \multirow{3}{*}{\makecell[c]{$\tilde{O}\big(\mathrm{Max}_{\ell_2}(D)\cdot \big(\sqrt{d\log(nd)\log(1/\delta)}$
\\
$+\log^{3.5}U_{\ell_2}\cdot\sqrt{\log(1/\delta)}\big)/\varepsilon\big)$}} & \multirow{3}{*}{\makecell[c]{\textbf{Round} $1$: $\tilde{O}\big(\log^6 U_{\ell_2}\cdot \log (1/\delta)/\varepsilon\big)$ \\ \textbf{Round} $2$: $d+\tilde{O}(d^{1.5}\log^{1.5}(1/\delta)/(\varepsilon\sqrt{n})\big)$}}
\\
& & & &
\\
& & & &
\\
\cline{2-5}
& \multirow{3}{*}{\makecell[c]{Our\\ result}} & \multirow{3}{*}{\makecell[c]{Theorem\\\ref{th:high_dim_sum}}} & \multirow{3}{*}{\makecell[c]{$O\big(\mathrm{Max}_{\ell_2}(D)\cdot\log(d\log U_{\ell_2})$\\$ \cdot\sqrt{d\log(nd)\log(1/\delta)}/\varepsilon\big)$}} & \multirow{3}{*}{\makecell[c]{$d+\tilde{O}(d^{1.5}\log^{1.5}(1/\delta)/(\varepsilon\sqrt{n})\big)$}}
\\
& & & &
\\
& & & &
\\
\hhline{|~|----|}
&
\multicolumn{2}{c|}{\multirow{3}{*}{\makecell[c]{Best result under\\ central model~\cite{dong2023better}}}}& \multirow{3}{*}{\makecell[c]{$O\big(\mathrm{Max}_{\ell_2}(D)\cdot \big(\sqrt{d\log(1/\delta)}$
\\
$+\log\log U_{\ell_2}\big)/\varepsilon\big)$}} & \cellcolor[gray]{0.9}
\\
& \multicolumn{2}{c|}{} & &\cellcolor[gray]{0.9}
\\
& \multicolumn{2}{c|}{} & &\cellcolor[gray]{0.9}
\\
\hline
\multirow{5}{*}{\makecell[c]{Sparse \\ Vector\\Aggregation}} & \multirow{2}{*}{\makecell[c]{Our\\ result}} & Theorem & \multirow{2}{*}{\makecell[c]{$\tilde{O}(\Max_{\ell_2}(D)\cdot \log d \sqrt{\log(1/\delta)}/ \varepsilon)$}} & \multirow{2}{*}{\makecell[c]{$\|x_i\|_1 + 1 +O(d^{1.5}\log d \log^{1.5}(1/\delta)/(\varepsilon n))$}}
\\
& & \ref{th:sparse_vector} & 
\\
\hhline{|~|----|}
&
\multicolumn{2}{c|}{\multirow{3}{*}{\makecell[c]{Best result under\\ central model~\cite{dong2023better}}}}& \multirow{3}{*}{\makecell[c]{$O\big(\mathrm{Max}_{\ell_2}(D)\cdot \big(\sqrt{\log(1/\delta)}$
\\
$+\log\log U_{\ell_2}\big)/\varepsilon\big)$}} & \cellcolor[gray]{0.9}
\\
& \multicolumn{2}{c|}{} & &\cellcolor[gray]{0.9}
\\
& \multicolumn{2}{c|}{} & &\cellcolor[gray]{0.9}
\\
\hline
\end{tabular}
}
\caption{Comparison between our results and prior works on sum estimation, high-dimensional sum estimation, and sparse vector aggregation under shuffle model, where we use the absolute error, $\ell_2$ error, and $\ell_{\infty}$ error as the error metrics.
Each message contains $O(\log U+\log d+\log n)$ bits.  The mechanism without an indicator for the round runs in a single round.
Our communication cost for 1-D Sum requires the condition $n=\omega(\log^2 U)$.
}
\label{tab:comparison}
\end{table*}

\subsubsection{Contributions}
Our contributions are threefold, summarized in Table \ref{tab:comparison} and below:

\textbf{(1) Sum estimation.}
For the vanilla sum estimation problem\footnote{Note that although we focus on the integer domain $\{0,\dots,U\}$, our protocol easily extends to the real summation problem, where each value $x_i$ is a real number from $[0,1]$, by discretizing $[0,1]$ into $U$ buckets of width $1/U$.  This incurs an extra additive error of $O(n/U)$.  Thanks to the double logarithmic dependency on $U$, we could set $U$ sufficiently large (e.g., $U=n^{\log n}$) to make this additive error negligible while keeping the $O(\Max(D)\cdot \log\log n / \varepsilon)$ error bound. 
}, we present a single-round protocol (Section \ref{sec:strawman} and \ref{sec:sum}) that achieves the optimal error of $O(\Max(D) \cdot \log\log U / \varepsilon)$, which improves the error rate of $O(U/\varepsilon)$ from \cite{ghazi2021differentially} exponentially in $U$.
More importantly, we have $1+o(1)$ messages per client when $n=\omega(\log^2 U)$ (see Theorem~\ref{th:sum} for more details), a criterion typically met in most common regimes.

\smallskip

\textbf{(2) High-dimensional sum estimation.}
Next, we consider the sum estimation problem in high dimensions, which has been extensively studied in the machine learning literature 
under central DP \cite{huang2021instance,biswas2020coinpress,kamath2019privately}.  Here, each $x_i$ is a vector with integer coordinates taken from the $d$-dimensional ball of radius $U_{\ell_2}$ centered at the origin, and we wish to estimate $\mathrm{Sum}(D)$ with small $\ell_2$ error. 

The literature for this problem exhibits similar patterns to the 1D summation problem. 
Under central-DP, the state-of-the-art mechanism achieves an error proportional to $\sqrt{d}\cdot\mathrm{Max}_{\ell_2}(D)$, where $\Max_{\ell_2}(D):=\max_i \|x_i\|_2$ \cite{dong2023better}.  Generalizing the argument in the 1D case, $\Max_{\ell_2}(D)$ is an instance-specific lower bound for $d$-dimensional sum estimation, and the factor $\sqrt{d}$ is also optimal~\cite{huang2021instance}.
For shuffle-DP, \cite{huang2021instance} presented a one-round protocol achieving an error proportional to $\sqrt{d}\cdot U_{\ell_2}$ (i.e., not instance-specific) with $d+\tilde{O}(d^{1.5}\log^{1.5}(1/\delta)\\/(\varepsilon\sqrt{n})\big)$ message complexity. 
\cite{huang2021instance} observed that 
a two-round clipping mechanism can be used 
to achieve an instance-specific error, but as in the 1D case, this incurs high polylogarithmic factors in both the optimality ratio and the message complexity. 

In Section \ref{sec:highD}, we propose our single-round protocol for high-dimensional summation by treating our 1D summation protocol as a black box: we first do a rotation over the space, and invoke our 1D protocol in each dimension.  This approach has the same instance-optimal error as the central-DP up to polylogarithmic factors, and achieves the same message complexity as the existing worst-case error protocol.

\smallskip \textbf{(3) Sparse vector aggregation.~}
As the third application of our technique, we study the sparse vector aggregation problem.  This problem is the same as the high-dimensional sum estimation problem, except that (1) each $x_i$ is now a binary vector in $\{0,1\}^d$, (2) the $x_i$'s are sparse, i.e., $\|x_i\|_1 = \|x_i\|_2^2 \ll d$, and (3) we aim at an $\ell_\infty$ error.  This problem is also known as the \textit{frequency estimation} problem under \textit{user-level DP}, where each user contributes a set of elements from $[d]$, and we wish to estimate the frequency of each element.  For this problem, people are more interested in the $\ell_\infty$ error since we would like each frequency estimate to be accurate.

Under central-DP, the state-of-the-art algorithm  already achieves $\ell_\infty$ error with $\mathrm{Max}_{\ell_2}(D)$ \cite{dong2023better}. Under shuffle-DP, there is no known prior work on this problem.   Our high-dimensional sum estimation protocol can solve the problem, but it does not yield a good $\ell_\infty$ error and incurs a message complexity of at least $d$ per user, even if the user has only a few elements.

In Section \ref{sec:sparse}, we present our one-round sparse vector aggregation protocol.  This protocol can be regarded as a variant of our 1D sum protocol, where we divide the data per its sparsity. This has error of $\Max_{\ell_2}(D)$, and sends $\|x_i\|_1 + 1 +O(d^{1.5}\log d \log^{1.5}(1/\delta)/(\varepsilon n))$ messages for user $i$. Note that this $\ell_\infty$ error implies an $\ell_2$ error that is $\sqrt{d}$ times larger (but not vice versa), so it also matches the high-dimensional sum estimation protocol in terms of $\ell_2$ error.  Furthermore, it exploits the sparsity of each $x_i$ in the message complexity.  It remains an interesting open problem if the extra $O(d^{1.5}\log d \log^{1.5}(1/\delta)/(\varepsilon n))$ term can be reduced.

\section{Related Work}

For sum estimation under central-DP, the worst-case optimal error $O(U/\varepsilon)$ can be easily achieved by the \textit{Laplace mechanism}. Many papers have studied how to obtain instance-specific error, i.e., an error depending on $\mathrm{Max}(D)$~\cite{asi2020instance,andrew2019differentially, pichapati2019adaclip, mcmahan2017learning,huang2021instance,dong2022r2t,fang2022shifted,dick2023subset,dong2023universal}.  Most of these works rely on the clipping mechanism~\cite{asi2020instance,andrew2019differentially, pichapati2019adaclip, mcmahan2017learning,huang2021instance,dong2022r2t,dick2023subset,dong2023universal}. 
Similarly, for the high-dimensional sum aggregation problem, existing approaches have achieved instance-specific error by using the clipping mechanism~\cite{kamath2019privately,kamath2020private,biswas2020coinpress,huang2021instance,dong2023better};  such mechanisms also yield an $\ell_{\infty}$ error of $\mathrm{Max}_{\ell_2}(D)$ for sparse vector aggregation.

In the shuffle-DP setting, for sum estimation, two settings are used. In the \textit{single-message} setting, each user sends one message. Here, \cite{balle2019privacy} achieve an error of $O(Un^{1/6})$ and further show that this is worst-case optimal.  In the \textit{multi-message} setting, where each user is allowed to send multiple messages, most prior works try to achieve the worst-case optimal error while minimizing communication costs.  Cheu~et al.~\cite{cheu2019distributed} first achieved an error of $O(U\sqrt{\log(1/\delta)}/\varepsilon)$ with $O(\sqrt{n})$ messages sent per user.  Then, \cite{ghazi2019scalable} achieved the same error but reduced the number of messages per user to $O(\log(n))$. \cite{ghazi2020privateb} and \cite{balle2020private} further improved the error to $O(U/\varepsilon)$ with constant messages. 
Recently, \cite{ghazi2021differentially} reduced that communication to $1+o(1)$ messages per user. 
We aim to obtain instance-optimal error, while keeping the $1+o(1)$ per-client message complexity of \cite{ghazi2021differentially}.

\section{Preliminaries}
We use the following notation: $\mathbb Z$ is the domain of all integers, $\mathbb Z_{\geq 0}$ non-negative integers, and $\mathbb Z_+$ positive integers.
Let $D = (x_1,x_2,\dots,x_n)$, where user $i$ holds an integer $x_i$ from $\{0\}\cup [U]$.  
For simplicity, we assume that $U$ is a power of $2$.
We would like to estimate $\mathrm{Sum}(D) = \sum_i x_i$.  For brevity, we often interpret $D$ as a multiset, and $D\cap [a,b]$ denotes the multiset of elements of $D$ that fall into $[a,b]$.  We introduce two auxiliary functions: $\mathrm{Count}(D)$ is the cardinality of $D$ (duplicates are counted); $\mathrm{Max}(D,k)$ is the $k$th largest value of $D$, or more precisely,
\[\mathrm{Max}(D,k) := \max \Big\{t:\mathrm{Count}(D\cap [t,U])\geq k\Big\}.\]

\subsection{Differential Privacy}
\label{sec:dp}

\begin{definition}[Differential privacy] 
\label{def:diff_privacy}
For $\varepsilon, \delta > 0$, an algorithm $\mathcal{M}:\mathcal{X}^n\rightarrow \mathcal{Y}$ is $(\varepsilon, \delta)$-differentially private (DP) if for any neighboring instances $D\sim D'$ (i.e., $D$ and $D'$ differ by a single element), $\mathcal{M}(D)$ and $\mathcal{M}(D')$ are $(\varepsilon,\delta)$-indistinguishable, i.e., for any subset of outputs $Y\subseteq\mathcal{Y}$,
\[\mathsf{Pr}[\mathcal{M}(D)\in Y] \le e^\varepsilon \cdot \mathsf{Pr}[\mathcal{M}(D') \in Y] + \delta.\]
\end{definition}
The privacy parameter $\varepsilon$ is typically between 0.1 and 10, while $\delta$ should be much  smaller than $1/n$.

All DP models can be captured by the definition above by appropriately defining $\mathcal{M}(D)$. In \textit{central}-DP,  $\mathcal{M}(D)$ is just the output of data curator;
in \textit{local}-DP, the local randomizer $\mathcal{R}:\mathcal{X}\rightarrow \mathcal{Z}$ outputs a message in $\mathcal{Z}$, and $\mathcal{M}(D)$ is defined as the vector $\big(\mathcal{R}(x_1),\mathcal{R}(x_2),\dots, \mathcal{R}(x_n)\big)$; in \textit{shuffle}-DP,  $\mathcal{R}:\mathcal{X} \rightarrow \mathbb{N}^{\mathcal{Z}}$ outputs a multiset of messages and $\mathcal{M}(D)$ is the (multiset) union of the $\mathcal{R}(x_i)$'s. 

DP enjoys the following properties regardless of the specific model:

\begin{lemma} [Post Processing \cite{dwork2014algorithmic}]
\label{lm:post_processing_dp}
If $\mathcal{M}$ satisfies $(\varepsilon,\delta)$-DP and $\mathcal{M}'$ is any randomized mechanism, then $\mathcal{M}'(\mathcal{M}(D))$ satisfies $(\varepsilon,\delta)$-DP.
\end{lemma}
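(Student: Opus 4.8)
The plan is to reduce the randomized post-processing to the deterministic case and then average over the internal coins of $\mathcal{M}'$. First I would observe that any randomized mechanism $\mathcal{M}'$ can be written as a family of deterministic maps $\{f_r\}_r$ indexed by its internal randomness $r$, which is drawn from some distribution independent of $\mathcal{M}(D)$. Consequently $\mathcal{M}'(\mathcal{M}(D))$ is distributed exactly as $f_r(\mathcal{M}(D))$ when $r$ is sampled first and then $\mathcal{M}$ is run on $D$.

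Second, I would fix a single deterministic map $f = f_r$ and analyze it. For any (measurable) output set $Z$, we have $\mathsf{Pr}[f(\mathcal{M}(D)) \in Z] = \mathsf{Pr}[\mathcal{M}(D) \in f^{-1}(Z)]$. Applying the $(\varepsilon,\delta)$-DP guarantee of $\mathcal{M}$ from Definition~\ref{def:diff_privacy} with the particular output subset $Y := f^{-1}(Z)$ yields $\mathsf{Pr}[\mathcal{M}(D) \in f^{-1}(Z)] \le e^\varepsilon \cdot \mathsf{Pr}[\mathcal{M}(D') \in f^{-1}(Z)] + \delta = e^\varepsilon \cdot \mathsf{Pr}[f(\mathcal{M}(D')) \in Z] + \delta$ for every pair of neighbors $D \sim D'$.

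Third, I would integrate over the randomness $r$. Since $r$ is independent of the data, $\mathsf{Pr}[\mathcal{M}'(\mathcal{M}(D)) \in Z] = \mathbb{E}_r\big[\mathsf{Pr}[f_r(\mathcal{M}(D)) \in Z]\big] \le \mathbb{E}_r\big[e^\varepsilon \cdot \mathsf{Pr}[f_r(\mathcal{M}(D')) \in Z] + \delta\big] = e^\varepsilon \cdot \mathsf{Pr}[\mathcal{M}'(\mathcal{M}(D')) \in Z] + \delta$. This is precisely $(\varepsilon,\delta)$-indistinguishability of $\mathcal{M}'(\mathcal{M}(D))$ and $\mathcal{M}'(\mathcal{M}(D'))$; since $D \sim D'$ were arbitrary neighbors, the composed mechanism $\mathcal{M}' \circ \mathcal{M}$ is $(\varepsilon,\delta)$-DP, which proves the lemma.

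The only delicate point — and the part I expect to require the most care rather than any real difficulty — is measurability bookkeeping: one must ensure that $f_r^{-1}(Z)$ is a legitimate output event of $\mathcal{M}$ to which its DP guarantee applies, and that $(r,y) \mapsto f_r(y)$ is jointly measurable so that pushing $\mathbb{E}_r$ through the probability is justified (Fubini/Tonelli). In the discrete message setting relevant to this paper (each message being a string of $O(\log U + \log d + \log n)$ bits, so all relevant spaces are countable), these conditions hold automatically, so this obstacle is purely notational.
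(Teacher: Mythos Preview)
Your argument is correct and is the standard proof of the post-processing lemma. Note, however, that the paper does not supply its own proof of this statement: Lemma~\ref{lm:post_processing_dp} is simply quoted from \cite{dwork2014algorithmic} without proof, so there is nothing in the paper to compare against beyond observing that your proof matches the classical one found in that reference.
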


\begin{lemma}
[Sequential Composition \cite{dwork2014algorithmic}]
\label{lm:sequential_composition_dp}
If $\mathcal{M}$ is a (possibly adaptive) composition of differentially private mechanisms $\mathcal{M}_1,  \ldots, \mathcal{M}_k$, where each $\mathcal{M}_i$ satisfies $(\varepsilon, \delta)$-DP, then $\mathcal{M}$ satisfies $(\varepsilon', \delta')$-DP, where
\begin{enumerate}
\item $\varepsilon' = k\varepsilon$ and $\delta'=k\delta$; [Basic Composition]
\item $\varepsilon' = \varepsilon\sqrt{2k\log\frac{1}{\delta''}} + k\varepsilon(e^{\varepsilon} - 1)$ and $\delta'=k\delta+\delta''$ for any $\delta''>0$. [Advanced Composition]
\end{enumerate}
\end{lemma}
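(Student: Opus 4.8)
I would prove the two bullets separately, both through the \emph{privacy-loss random variable}, with advanced composition adding a martingale concentration step on top of the argument that already suffices for basic composition. The common first move is to reduce from $(\varepsilon,\delta)$-DP to pure $(\varepsilon,0)$-DP plus a $\delta$-probability failure, using the standard decomposition characterization of approximate DP (e.g.\ Kasiviswanathan--Smith / Dwork--Roth): $\mathcal M_i$ is $(\varepsilon,\delta)$-DP essentially iff the output law of $\mathcal M_i(D)$ can be written as a mixture $(1-\delta)P_i+\delta Q_i$ and that of $\mathcal M_i(D')$ as $(1-\delta)P_i'+\delta Q_i'$ with $P_i,P_i'$ being $(\varepsilon,0)$-indistinguishable (up to the usual technical care at the boundary). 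Conditioning on the ``good'' component at each of the $k$ (possibly adaptively chosen) steps happens with probability at least $1-k\delta$ by a union bound, and contributes the $k\delta$ term to $\delta'$ in both parts.

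\textbf{Basic composition.} On the good event, each step is $(\varepsilon,0)$-DP, so for any transcript $(y_1,\dots,y_k)$ the likelihood ratio of $\mathcal M(D)$ to $\mathcal M(D')$ telescopes (using conditional densities for the adaptive case) into a product of $k$ ratios, each lying in $[e^{-\varepsilon},e^{\varepsilon}]$; hence the product lies in $[e^{-k\varepsilon},e^{k\varepsilon}]$, which gives $(k\varepsilon,0)$-indistinguishability on that event and $(k\varepsilon,k\delta)$-DP overall.

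\textbf{Advanced composition.} Keeping the same decomposition, I reduce (at cost $k\delta$) to each step being $(\varepsilon,0)$-DP. Fix $D\sim D'$, and for an adaptively generated transcript define the increments
\[
L_i \;=\; \ln\frac{\mathsf{Pr}[\mathcal M_i(D)=y_i \mid y_1,\dots,y_{i-1}]}{\mathsf{Pr}[\mathcal M_i(D')=y_i \mid y_1,\dots,y_{i-1}]},
\]
so the total privacy loss is $\sum_{i=1}^{k}L_i$ and it suffices to show $\mathsf{Pr}_{y\sim\mathcal M(D)}\!\big[\sum_i L_i>\varepsilon'\big]\le\delta''$ for $\varepsilon'=\varepsilon\sqrt{2k\ln(1/\delta'')}+k\varepsilon(e^{\varepsilon}-1)$. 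Two ingredients drive this: (i) $|L_i|\le\varepsilon$ pointwise, by $(\varepsilon,0)$-DP; and (ii) the conditional mean satisfies $\mathbb E[L_i\mid y_1,\dots,y_{i-1}]\le\varepsilon(e^{\varepsilon}-1)$ — this is the elementary fact that a random variable supported on $[-\varepsilon,\varepsilon]$, whose law is an $e^{\pm\varepsilon}$-tilt of its companion, has mean at most $\varepsilon(e^{\varepsilon}-1)$ (equivalently, the KL divergence between two $\varepsilon$-close distributions is $\le\varepsilon(e^{\varepsilon}-1)$). Then $M_t:=\sum_{i\le t}\big(L_i-\mathbb E[L_i\mid y_{<i}]\big)$ is a martingale with bounded increments, so Azuma--Hoeffding gives $\mathsf{Pr}[M_k>\varepsilon\sqrt{2k\ln(1/\delta'')}]\le\delta''$; combined with (ii), $\sum_i L_i\le k\varepsilon(e^{\varepsilon}-1)+\varepsilon\sqrt{2k\ln(1/\delta'')}=\varepsilon'$ except with probability $\delta''$. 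Adding back the $k\delta$ from the decomposition yields $(\varepsilon',\,k\delta+\delta'')$-DP.

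\textbf{Main obstacle.} The routine parts are the likelihood-ratio product and the union bounds; the delicate points are the two things that make the martingale step legitimate: justifying the mixture decomposition of approximate DP and checking that the $k$ bad components cost only $k\delta$ even under adaptivity, and pinning down the constant in $\mathbb E[L_i]\le\varepsilon(e^{\varepsilon}-1)$ together with the correct Azuma increment range (centering each $L_i$ and using $|L_i|\le\varepsilon$, or a sharpened range) so that the concentration bound reproduces \emph{exactly} the stated $\varepsilon'$ rather than merely a bound of the same shape.
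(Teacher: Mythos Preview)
The paper does not prove this lemma at all: it is stated as a preliminary result and attributed to \cite{dwork2014algorithmic} (the Dwork--Roth monograph), with no proof given in the paper itself. So there is no ``paper's own proof'' to compare against.

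That said, your sketch is essentially the standard Dwork--Rothblum--Vadhan argument that appears in \cite{dwork2014algorithmic}, and it is correct in outline. A couple of small remarks: (i) the reduction from $(\varepsilon,\delta)$ to $(\varepsilon,0)$ in the textbook proof is usually phrased via a ``bad set'' $B$ on which the density ratio exceeds $e^{\varepsilon}$, with $\Pr[B]\le\delta$, rather than the Kasiviswanathan--Smith mixture decomposition you describe; both routes work, but the bad-set version is slightly cleaner for handling adaptivity. (ii) The KL bound you need is exactly $D(P\|Q)\le \varepsilon(e^{\varepsilon}-1)$ for $(\varepsilon,0)$-close $P,Q$, which is Lemma~3.18 in Dwork--Roth; your phrasing of it is fine. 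With those in hand, Azuma on the centered privacy-loss increments gives precisely the stated $\varepsilon'$, so your worry about recovering the exact constant is unfounded---the textbook proof gets it the same way.
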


\begin{lemma}
[Parallel Composition \cite{mcsherry2009privacy}]
\label{lm:parallel_composition_dp}
Let $\mathcal{X}_1,\dots, \mathcal{X}_k$ each be a subdomain of $\mathcal{X}$ that are pairwise disjoint, and let each $\mathcal{M}_i:\mathcal{X}_i^n \rightarrow \mathcal{Y}$ be an $(\varepsilon,\delta)$-DP mechanism.  Then $\mathcal{M}(D):= (\mathcal{M}_1(D\cap \mathcal{X}_1), \dots, \mathcal{M}_k(D\cap \mathcal{X}_k))$ also satisfies $(\varepsilon, \delta)$-DP.
\end{lemma}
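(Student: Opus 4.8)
The strategy is the classical argument behind parallel composition \cite{mcsherry2009privacy}: use disjointness of the subdomains to localize the difference between two neighboring databases to a single one of the $k$ sub-mechanisms, and then invoke the elementary fact that appending an \emph{independent} random variable to both sides of an $(\varepsilon,\delta)$-indistinguishable pair preserves $(\varepsilon,\delta)$-indistinguishability.

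First I would fix arbitrary neighbors $D\sim D'$. Because $\mathcal{X}_1,\dots,\mathcal{X}_k$ are pairwise disjoint, the single element in which $D$ and $D'$ differ lies in at most one subdomain; if it lies in none, then $D\cap\mathcal{X}_i=D'\cap\mathcal{X}_i$ for every $i$ and $\mathcal{M}(D),\mathcal{M}(D')$ are identically distributed, so assume it lies in exactly one, say $\mathcal{X}_j$. Then $D\cap\mathcal{X}_i = D'\cap\mathcal{X}_i$ for all $i\neq j$, while $D\cap\mathcal{X}_j$ and $D'\cap\mathcal{X}_j$ are themselves neighbors. Consequently $\mathcal{M}_i(D\cap\mathcal{X}_i)$ and $\mathcal{M}_i(D'\cap\mathcal{X}_i)$ are identically distributed for $i\neq j$, and $\mathcal{M}_j(D\cap\mathcal{X}_j)$, $\mathcal{M}_j(D'\cap\mathcal{X}_j)$ are $(\varepsilon,\delta)$-indistinguishable by the DP guarantee of $\mathcal{M}_j$.

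Next I would establish the auxiliary claim: if $A,B$ are $(\varepsilon,\delta)$-indistinguishable random variables over $\mathcal{Y}$ and $C$ is a random variable independent of $A$ and, separately, of $B$, then $(A,C)$ and $(B,C)$ are $(\varepsilon,\delta)$-indistinguishable. This follows by conditioning on $C$: for a measurable $Y$ with $c$-section $Y_c:=\{y:(y,c)\in Y\}$,
\[
\mathsf{Pr}[(A,C)\in Y] \;=\; \mathbb{E}_{c\sim C}\big[\mathsf{Pr}[A\in Y_c]\big] \;\le\; \mathbb{E}_{c\sim C}\big[e^{\varepsilon}\mathsf{Pr}[B\in Y_c]+\delta\big] \;=\; e^{\varepsilon}\,\mathsf{Pr}[(B,C)\in Y]+\delta .
\]
Finally, since $\mathcal{M}_1,\dots,\mathcal{M}_k$ use independent internal randomness, the coordinates of $\mathcal{M}(D)=(\mathcal{M}_1(D\cap\mathcal{X}_1),\dots,\mathcal{M}_k(D\cap\mathcal{X}_k))$ are mutually independent; let $C$ denote the collection of all coordinates other than the $j$-th. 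Then $C$ has the same distribution under $D$ and under $D'$ (the corresponding inputs agree) and is independent of the $j$-th coordinate in either case, so the claim applied with $A=\mathcal{M}_j(D\cap\mathcal{X}_j)$, $B=\mathcal{M}_j(D'\cap\mathcal{X}_j)$ shows $(\mathcal{M}_j(D\cap\mathcal{X}_j),C)$ and $(\mathcal{M}_j(D'\cap\mathcal{X}_j),C)$ are $(\varepsilon,\delta)$-indistinguishable. As $\mathcal{M}(D)$ and $\mathcal{M}(D')$ are recovered from these pairs by the same fixed bijection that reinserts coordinate $j$ in position, post-processing (Lemma~\ref{lm:post_processing_dp}) gives that $\mathcal{M}(D)$ and $\mathcal{M}(D')$ are $(\varepsilon,\delta)$-indistinguishable; since $D\sim D'$ was arbitrary, $\mathcal{M}$ is $(\varepsilon,\delta)$-DP.

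I expect the only genuine care to be in two places: justifying that the change localizes to a single subdomain (immediate once $D\sim D'$ is read as insertion/deletion of one record, which is also what makes the \emph{same} $\varepsilon$, rather than $2\varepsilon$, suffice), and the small amount of measure-theoretic bookkeeping in the auxiliary claim (sections of measurable sets, Fubini, and the independence of $C$). Everything else is routine.
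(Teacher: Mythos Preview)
The paper does not give its own proof of this lemma; it is stated as a citation to \cite{mcsherry2009privacy} and used as a black box. Your proposal is the standard, correct argument for parallel composition (localize the change to one subdomain by disjointness, then absorb the identically-distributed remaining coordinates via conditioning), and your flag about the add/remove versus substitution reading of $D\sim D'$ is exactly the point that controls whether the conclusion is $(\varepsilon,\delta)$ or $(2\varepsilon,2\delta)$.
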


\subsection{Sum Estimation in Central-DP}
\label{sec:sum_central}

In central-DP, one of the most widely used DP mechanisms is the Laplace mechanism:

\begin{lemma}[Laplace Mechanism]
\label{lm:lap}
Given any query $Q:\{0,1,\dots,U\}^n\rightarrow \mathbb{R}$, the \textit{global sensitivity} is defined as $\mathrm{GS}_Q = \max_{D\sim D'} \big|Q(D)-Q(D')\big|$.  
The mechanism $\mathcal{M}(D)=Q(D)+\mathrm{GS}_Q/\varepsilon \cdot \mathrm{Lap}(1)$
preserves $(\varepsilon,0)$-DP, where $\mathrm{Lap}(1)$ denotes a random variable drawn from the unit Laplace distribution.
\end{lemma}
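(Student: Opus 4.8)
The plan is to reduce the privacy claim to a pointwise bound on the ratio of output densities, and then integrate. First I would write $b := \mathrm{GS}_Q/\varepsilon$ and recall that the unit Laplace distribution has density $\tfrac{1}{2}e^{-|t|}$, so the noise $b\cdot\mathrm{Lap}(1)$ has density $g(t) = \tfrac{1}{2b}e^{-|t|/b}$. Since $\mathcal{M}(D) = Q(D) + b\cdot\mathrm{Lap}(1)$ is a deterministic shift of this noise by the scalar $Q(D)$, the output $\mathcal{M}(D)$ has density $p_D(y) = g\big(y - Q(D)\big) = \tfrac{1}{2b}e^{-|y - Q(D)|/b}$ on $\mathbb{R}$, and similarly $p_{D'}(y) = \tfrac{1}{2b}e^{-|y - Q(D')|/b}$ for a neighboring instance $D' \sim D$.

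Next I would bound the likelihood ratio at an arbitrary point $y \in \mathbb{R}$:
\[
\frac{p_D(y)}{p_{D'}(y)} = \exp\!\left(\frac{|y - Q(D')| - |y - Q(D)|}{b}\right) \le \exp\!\left(\frac{|Q(D) - Q(D')|}{b}\right) \le \exp\!\left(\frac{\mathrm{GS}_Q}{b}\right) = e^{\varepsilon},
\]
where the first inequality is the reverse triangle inequality $|y-Q(D')| - |y-Q(D)| \le |Q(D) - Q(D')|$, and the second uses the definition $\mathrm{GS}_Q = \max_{D\sim D'}|Q(D)-Q(D')|$. This holds for every $y$, with no exceptional set.

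Finally I would lift the pointwise bound to sets: for any (measurable) $Y \subseteq \mathbb{R}$,
\[
\mathsf{Pr}[\mathcal{M}(D)\in Y] = \int_Y p_D(y)\,dy \le \int_Y e^{\varepsilon} p_{D'}(y)\,dy = e^{\varepsilon}\,\mathsf{Pr}[\mathcal{M}(D')\in Y],
\]
which is exactly the $(\varepsilon,\delta)$-indistinguishability condition of Definition~\ref{def:diff_privacy} with $\delta = 0$; since $D \sim D'$ was arbitrary, $\mathcal{M}$ is $(\varepsilon,0)$-DP. There is no real obstacle here — the argument is entirely classical; the only points requiring a bit of care are invoking the triangle inequality in the correct direction and noting that the pointwise density ratio bound is uniform (no failure probability), which is precisely why $\delta = 0$ rather than some positive slack.
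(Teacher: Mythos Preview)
Your argument is correct and is the standard proof of the Laplace mechanism. Note, however, that the paper does not actually prove Lemma~\ref{lm:lap}: it is stated in the preliminaries as a well-known result (cf.\ \cite{dwork2014algorithmic}) and used without proof, so there is no ``paper's own proof'' to compare against. Your write-up would serve perfectly well as the missing justification.
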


For the sum estimation problem, $\mathrm{GS}_Q = U$, which means that the Laplace mechanism yields an error of $O(U/\varepsilon)$.  As mentioned in Section \ref{sec:intro}, although such an error bound is already worst-case optimal, it is not very meaningful when typical data are much smaller than $\mathrm{GS}_Q = U$.  

\textbf{Clipping mechanism} The clipping mechanism has been widely used to achieve an instance-specific error bound depending on $\mathrm{Max}(D)$. It first finds a clipping threshold $\tau$, and then applies the Laplace mechanism on $\mathrm{Clip}(D,\tau)$ with $\mathrm{GS}_Q = \tau$.  The clipping introduces a  bias of $\Max(D) \cdot |\{i\in [n] \mid x_i>\tau\}|$, so it is important to choose a $\tau$ that balances the DP noise and bias.  In the central model, the best result is \cite{dong2023universal}, which
finds a $\tau$ between $\Max(D,\log\log U/\varepsilon)$ and  $2\cdot \Max(D)$.  Plugging this $\tau$ into the clipping mechanism yields an error of $O\big(\mathrm{Max}(D)\cdot \log\log U/\varepsilon\big)$.

\begin{algorithm}[t]
\LinesNumbered 
\SetNoFillComment
\KwIn{$x_i$, $\varepsilon$, $\delta$, $n$, $U$, $\lambda$, $\zeta$}
$U',x_i'\gets U,x,$\;
\uIf{$U>\sqrt{n/\zeta}$}{
    \tcc{Randomized rounding of $x_i$}
    $B\gets \big\lceil U/(\sqrt{n/\zeta})\big\rceil$\;
    $U'\gets \sqrt{n}$\; 
    $p\gets \big\lceil x_i/B\big\rceil - x_i/B$\;
    $x_i'\gets \begin{cases}
        \big\lceil x_i/B\big\rceil & \text{with probability $p$}
        \\
        \big\lceil x_i/B\big\rceil-1 & \text{with probability $1-p$}
    \end{cases}$\;
}
$S_i \gets \{\}$\;
\If{$x_i'\neq 0$}{
    Add $x_i'$ into $S_i$\;
}
\tcc{Sample a vector from $\mathcal{P}$}
$(z^{-U'},\dots,z^{-1},z^{1},\dots,z^{U'})\sim\mathcal{P}(\varepsilon,\delta,n,\lambda,U')$\;
\For{$j\gets -U',-U'+1,\dots,-1,1,\dots,U'-1,U'$}{      Add $z_j$ copies of $j$ into $S_i$\;
}
Send $S_i$\;
\caption{Randomizer of $\mathrm{BaseSumDP}$~\cite{ghazi2021differentially}.}
\label{alg:baseline_randomizer}
\end{algorithm}

\begin{algorithm}[t]
\LinesNumbered 
\KwIn{$R=\cup_i S_i$ with $S_i$ from user $i$, $\varepsilon$, $\delta$, $n$, $U$, $\lambda$, $\zeta$}
$\widetilde{\mathrm{Sum}}(D)\gets\sum_{y\in R}y$\;
\uIf{$U>\sqrt{n}$}{
    $B\gets \big\lceil U/(\sqrt{n/\zeta})\big\rceil$\;
    $\widetilde{\mathrm{Sum}}(D)\gets \widetilde{\mathrm{Sum}}(D)\cdot B$\;
}
\Return $\widetilde{\mathrm{Sum}}(D)$;
\caption{Analyzer of $\mathrm{BaseSumDP}$~\cite{ghazi2021differentially}.}
\label{alg:baseline_analyzer}
\end{algorithm}

\subsection{Sum Estimation in Shuffle-DP}

In shuffle-DP, the state-of-the-art protocol for sum estimation is proposed by Ghazi et al.~\cite{ghazi2021differentially} and achieves an error that is $1+o(1)$ times that of the Laplace mechanism and each user sends $1+o(1)$ messages in expectation.  Both are optimal (in the worst-case sense) up to lower-order terms.  We briefly describe their protocol below as it will also be used in our protocols.  

Each user $i$ first sends $x_i$ if it is non-zero. To ensure privacy, users additionally send noises drawn from $\{-U,\dots, U\} - \{0\}$ based on an ingeniously designed distribution $\mathcal{P}$, such that most noises cancel out while the remaining noises add up to a random variable drawn from the discrete Laplace distribution\footnote{The Discrete Laplace distribution with scale $s$ has a probability mass function $\frac{1-e^{-1/s}}{1+e^{-1/s}}\cdot e^{-|k|/s}$ for each $k\in \mathbb{Z}$.} with scale $(1-\lambda)\varepsilon$ for some parameter $\lambda$. 
The cancelled out noises are meant to flood the messages containing the true data $x_i$ so as to ensure $(\lambda \varepsilon, \delta)$-DP.  Thus, the entire protocol satisfies $(\varepsilon,\delta)$-DP.  

For a large $U$, their protocol should be applied after reducing the domain size to $\sqrt{n/\zeta}$ for some $\zeta=o(1)$.  More precisely, we first randomly round each $x_i$ to a multiple of $\frac{U}{\sqrt{n/\zeta}}$. This introduces an additional error of $O\big(\sqrt{\zeta} U\big)$, which is a lower-order term in the error bound.  Meanwhile, it reduces the noise messages to $O\big(\log^2 n\log(1/\delta)/(\varepsilon\lambda\sqrt{\zeta n})\big)=o(1)$.
The detailed randomizer and analyzer are given in Algorithm~\ref{alg:baseline_randomizer} and \ref{alg:baseline_analyzer}. 
The analyzer obviously runs in $O(n)$ time, and they show how to implement the randomizer in time  $O\big(\min(U,\sqrt{n})\big)$.  The following lemma summarizes their protocol:

\begin{lemma}
\label{lm:baseline}
Given any $\varepsilon>0$, $\delta>0$, $n$, $U$, any $\lambda$ and any $\zeta$, $\mathrm{BaseSumDP}$ solves the sum estimation problem under shuffle-DP with the following guarantees:
\begin{enumerate}
    \item The messages received by the analyzer satisfy $(\varepsilon, \delta)$-DP;
    \item With probability at least $1-\beta$, the error is bounded by $(\zeta+\frac{1}{\varepsilon(1-\lambda)})\cdot U \ln(2/\beta)$;
    \item In expectation, each user sends 
    \[\mathbf{I}(x_i\neq 0)+ O\Big({\log^2 n \log(1/\delta)\over\varepsilon \lambda\sqrt{\zeta n}}\Big)\]
    messages each of $O\Big(\min\big(\log n,\log U\big)\Big)$ bits. 
\end{enumerate}
\end{lemma}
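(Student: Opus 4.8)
The plan is to observe that Lemma~\ref{lm:baseline} is a mild restatement of the main guarantee of $\mathrm{BaseSumDP}$ proved in \cite{ghazi2021differentially}, and to reproduce their argument organized around the three claims and the two regimes: when $U\le\sqrt{n/\zeta}$ no rounding occurs and $U'=U$, while when $U>\sqrt{n/\zeta}$ each user first applies an independent randomized rounding of $x_i$ to a multiple of $B=\lceil U/\sqrt{n/\zeta}\rceil$, rescaled so that $x_i'\in\{0,\dots,U'\}$ with $U'=\sqrt n$. Since the rounding is a local transformation applied by each user to its own input before any interaction, by post-processing (Lemma~\ref{lm:post_processing_dp}) it suffices for the privacy claim to analyze the protocol treating $x_i'$ as the input and $U'$ as the domain bound; I would record for later use that $\mathbb{E}[B x_i']=x_i$, that $|B x_i'-x_i|\le B$, that $x_1',\dots,x_n'$ are independent, and that $B U'=O(U)$ because $\zeta=o(1)$.

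For claim (1) I would follow the two-layer analysis of \cite{ghazi2021differentially}. Conditioned on the rounding, the shuffled multiset $R=\bigcup_i S_i$ reveals (a) its total $\sum_{y\in R}y=\sum_i x_i'+N$, where $N$ is the aggregate of the $\mathcal{P}$-noise over all users, and (b) the remaining ``shape'' of $R$. The distribution $\mathcal{P}(\varepsilon,\delta,n,\lambda,U')$ is designed so that $N$ is distributed as a discrete Laplace calibrated to sensitivity $U'$ at privacy level $(1-\lambda)\varepsilon$, which makes the total $((1-\lambda)\varepsilon,0)$-DP; and so that the remaining randomness produces enough mutually cancelling $\pm v$ messages to flood the at most one real message $x_i'$, yielding $(\lambda\varepsilon,\delta)$-DP for the shape via a deniability / amplification-by-shuffling argument. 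Composing the two layers (Lemma~\ref{lm:sequential_composition_dp}) gives $(\varepsilon,\delta)$-DP for $R$, and post-processing extends this to the analyzer's output.

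For claim (2), the analyzer returns $B\sum_i x_i'+B N$. The term $B\sum_i x_i'$ differs from $\mathrm{Sum}(D)$ by $\sum_i(Bx_i'-x_i)$, a sum of $n$ independent mean-zero variables of magnitude at most $B$, so by a Hoeffding bound it is $O\big(B\sqrt{n\log(1/\beta)}\big)=O\big(\sqrt\zeta\,U\sqrt{\log(1/\beta)}\big)$ with probability $1-\beta/2$, which is of lower order. For $BN$, the discrete Laplace tail bound gives $|N|\le \tfrac{U'}{(1-\lambda)\varepsilon}\ln(2/\beta)$ with probability $1-\beta/2$, hence $|BN|= O\big(\tfrac{1}{(1-\lambda)\varepsilon}U\ln(2/\beta)\big)$ using $BU'=O(U)$; a union bound yields the stated $(\zeta+\tfrac{1}{(1-\lambda)\varepsilon})U\ln(2/\beta)$ bound. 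For claim (3), each user sends the single real message $x_i'$ only when $x_i'\ne0$, and a zero input stays zero, so the real-message count is at most $\mathbf{I}(x_i\ne0)$; the noise-message count is $\sum_j z_j$, whose expectation is $O\big(\log^2 n\log(1/\delta)/(\varepsilon\lambda\sqrt{\zeta n})\big)$ by the design of $\mathcal{P}$ (the flooding mass needed for $(\lambda\varepsilon,\delta)$-deniability over a domain of size $O(U')$, amortized over the $n$ users, after the $\sqrt\zeta$ shrinkage of $U'$). Each message lies in $\{-U',\dots,U'\}$, and $U'=O(\min(U,\sqrt{n/\zeta}))$, so $O(\min(\log n,\log U))$ bits suffice.

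The main obstacle is claim (1): establishing simultaneously that $N$ has exactly the intended discrete-Laplace marginal and that the cancelling messages provide sufficient flooding except with probability at most $\delta$ is precisely the technical heart of \cite{ghazi2021differentially}, relying on their explicit construction of $\mathcal{P}$. Here I would invoke that construction as a black box rather than re-derive it; the remaining work is the bookkeeping above across the two domain-size regimes and the routine tail bounds.
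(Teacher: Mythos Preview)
The paper does not prove Lemma~\ref{lm:baseline}; it is stated as a black-box summary of the protocol from \cite{ghazi2021differentially} (the sentence immediately preceding it reads ``The following lemma summarizes their protocol:''), and no proof is given. Your proposal correctly identifies this and sketches the argument from \cite{ghazi2021differentially}, so there is nothing in the paper to compare against beyond noting that your outline matches the informal description the paper gives before the lemma (the two-layer design: discrete-Laplace aggregate noise for the sum, plus flooding noise for deniability of individual messages; randomized rounding to shrink the domain when $U>\sqrt{n/\zeta}$).

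One small point: your Hoeffding step gives a rounding error of order $\sqrt{\zeta}\,U\sqrt{\log(1/\beta)}$, which matches the paper's informal text (``an additional error of $O(\sqrt{\zeta}\,U)$'') but not literally the $\zeta\cdot U\ln(2/\beta)$ term in the stated bound. This is a harmless discrepancy in how the constant is packaged (and irrelevant once $\zeta=o(1)$, as in the remark following the lemma), but if you are writing this up you should either adjust the exponent on $\zeta$ or note that the stated form absorbs the rounding term up to the choice of $\zeta$.
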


\noindent\textbf{Remark:} Setting $\lambda,\zeta=o(1)$ yields an error of $1+o(1)$ error and $1+o(1)$ messages.
In this paper, we will invoke $\mathrm{BaseSumDP}$ with $\lambda = 0.1$ and $\zeta = \min(0.1,\frac{0.1}{\varepsilon})$.  With this setting, the error  bound is $1.3\cdot U\ln(2/\beta)/\varepsilon$ and the message complexity is still $1+o(1)$. 




\medskip
Combining the clipping technique with $\mathrm{BaseSumDP}$ immediately leads to a two-round protocol in shuffle-DP: In round one, we find $\tau$; in round two, we invoke $\mathrm{BaseSumDP}$ on $\mathrm{Clip}(D,\tau)$.  This approach was suggested in \cite{huang2021instance}.  However, since the optimal central-model $\tau$-finding algorithm \cite{dong2023universal} cannot be used in the shuffle model, \cite{huang2021instance} instead used the complicated range-counting protocol of \cite{ghazi2021power} to find a $\tau$ such that
\begin{equation}
\label{eq:rank_error}
\mathrm{Max}(D,k)\leq \tau \leq {\mathrm{Max}}(D)
\end{equation}
for some $k = \tilde{O}\big(\log^{3.5}U\sqrt{\log(1/\delta)}/\varepsilon\big)$.  Plugging this $\tau$ into the clipping mechanism yields an error of $O(\Max(D) \cdot k)$.  The message complexity of their protocol is $\tilde{O}\big(\log^{6}U\log(1/\delta)/\varepsilon\big)$, dominated by the range-counting protocol \cite{ghazi2021power}.

\section{A Straw-man One-Round Protocol}
\label{sec:strawman}
We first present a simple one-round shuffle-DP protocol for the sum estimation problem.  Although it does not achieve either the desired error or communication rates from Section \ref{sec:intro}, it provides a foundation for our final solution.

\subsection{Domain Compression}
\label{sec:compression}

Our first observation is it is not necessary to consider all possible $\tau\in \{0\}\cup [U]$.  Instead, we only need to consider $\tau\in\{0,1,2,4,\dots,U\}$.
Specifically, we map the dataset $D$ to\footnote{All $\log$ have base 2.  Specially, define $\log(0):=-1$ and $2^{-1}:=0$.} 
\[\overline{D} = \Big\{\big\lceil\log(x_1)\big\rceil,\big\lceil\log(x_2)\big\rceil,\dots,\big\lceil\log(x_n)\big\rceil\Big\}.\]
Note that this compresses the domain from $\{0\}\cup[U]$ to $\{-1, 0\} \cup [\log U]$. 
After compressing the domain size from $U+1$ to $\log U + 2$, running the round-one protocol of \cite{huang2021instance} on $\overline{D}$ can now find a $\bar{\tau}$ such that
\begin{equation}
\label{eq:rank_error_compress}
\mathrm{Max}(\overline{D},k)\leq \bar{\tau} \leq {\mathrm{Max}}(\overline{D}),
\end{equation}
for some $k = \tilde{O}\big((\log\log U)^{3.5}\sqrt{\log(1/\delta)}/\varepsilon\big)$.  

In the second round, we use $\tau = 2^{\bar{\tau}}$ as the clipping threshold and invoke $\mathrm{BaseSumDP}$ on $D$.  Note that we always have $\tau \le 2\cdot \Max(D)$.  Furthermore, $D$ contains at most $k$ elements that are strictly larger than $\tau$, so the clipping mechanism yields an error of $O(\Max(D)\cdot k)$.

In addition to reducing the error, domain compression also reduces the message complexity of the first round from $\tilde O((\log U)^6 \log(1/\delta)/\epsilon)$ to $\tilde{O}((\log\log U)^6 \log(1/\delta)/\varepsilon)$.  The message complexity of the second round is the same as that of $\mathrm{BaseSumDP}$, i.e., $1+o(1)$.

\subsection{Try All Possible $\tau$}
\label{sec:one-round}

The domain compression technique narrows down the possible values for $\tau$ to just $\log U + 2$.  This allows us to try all possible $\tau$ simultaneously.  We can run $\log U + 2$ instances of $\mathrm{BaseSumDP}$, each with a different $\tau=0,1,2,4,\dots, U$. 
That is, each client $x_i$ runs $\mathrm{BaseSumDP}$ $\log U +2$ times, each time clipping its data $x_i$ with a different threshold before the randomizer protocol, and the analyzer computes $\log U +2$ different sums, one for each threshold.
All these are done concurrently to the protocol for finding $\bar{\tau}$.  Finally, the analyzer will return the output of the  $\mathrm{BaseSumDP}$ instance that has been executed with the correct $\tau = 2^{\bar{\tau}}$.  
However, the $\log U + 2 $ instances of $\mathrm{BaseSumDP}$ must split the privacy budget using sequential composition\footnote{``Sequential composition'' refers to privacy; all these instances are still executed in parallel in one round. }.  More precisely, we run each instance with privacy budget $\varepsilon/(2(\log U +2))$, while reserving the other $\varepsilon/2$ privacy budget for finding $\bar{\tau}$.  Thus, the $\mathrm{BaseSumDP}$ instance with clipping threshold $\tau$ must inject a DP noise of scale $O(\tau \log U / \varepsilon)$.  The clipping still introduces a bias of $O(\Max(D)\cdot k)$, so the total error becomes $\tilde{O}(\Max(D)\cdot (\log U + \sqrt{\log(1/\delta)}) /\varepsilon)$.

In terms of the message complexity, these $O(\log U)$ $\mathrm{BaseSumDP}$ instances together send $O(\log U)$ messages per user, in addition of the $O((\log\log U)^6 \log(1/\delta)/\varepsilon)$ messages for finding $\bar{\tau}$.  So the message complexity is now $O(\log U + \log(1/\delta)/\varepsilon)$.

Simple tweaks to this straw-man solution do not give the desired properties.  For instance, one may compress the domain to $\{0,1,c,c^2,\dots\}$ for some $c\ge 2$.  This lowers the message complexity to $O(\log_c U + \log(1/\delta)/\varepsilon)$ and each $\mathrm{BaseSumDP}$ instance has a privacy budget of $\varepsilon/\log_c U$.  But now $\tau$ may be as large as $c\cdot \Max(D)$, so the error increases to $O(\Max(D) \cdot c \log_c U/\varepsilon)$.  Thus, new ideas are needed to achieve our desiderata in a one-round protocol.





\section{Our Protocol}
\label{sec:sum}
In this section, we present our single-round protocol that achieves both optimal error and message complexity. 


\subsection{Domain Partitioning}

We see that the $O(\log U)$ factor blowup in the error of the straw-man solution is due to the $\log U+2$ $\mathrm{BaseSumDP}$ instances 
splitting the privacy budget using sequential composition.  In order to avoid the splitting, our idea is to partition the domain and then use parallel composition.  More precisely, we partition the domain into $\log U + 1$ disjoint sub-domains: $[1,1]$, $[2,2]$, $[3,4]$, $[5,8]$, $\dots$, $[U/2+1,U]$. It is clear that, for any $D$, we have 
\[\mathrm{Sum}(D) = \sum_{j=0}^{\log U}\mathrm{Sum}\big(D\cap[2^{j-1}+1,2^j]\big).\]
Furthermore, for any $\tau = 1,2,4,\dots,U$, the clipped sum is precisely the sum in the first $\log \tau +1$ sub-domains:
\[
\mathrm{Sum}\Big(\mathrm{Clip}(D,\tau)\Big) = 
\sum_{j=0}^{\log \tau}\mathrm{Sum}\big(D\cap[2^{j-1}+1,2^j]\big).
\]

\begin{figure*}[t]
\centering
\includegraphics[width=1\textwidth]{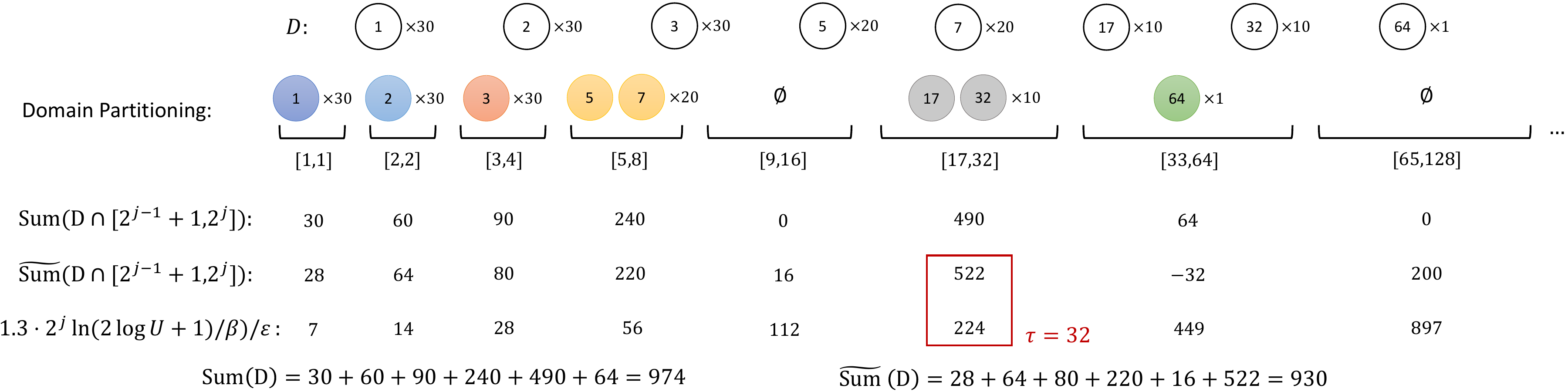}
    \caption{An illustration of our protocol for sum estimation. $U=2^{10}$, $\varepsilon=1$, and $\beta = 0.1$.}
    \label{fig:illustration}
\end{figure*}

Therefore, it suffices to estimate $\mathrm{Sum}\big(D\cap[2^{j-1}+1,2^j]\big)$ for each $j\in {0,1,\dots,\log U}$.  Importantly, since these sub-domains are disjoint, parallel composition can be applied and we can afford a privacy budget of $\varepsilon$ on each sub-domain.  We thus run a $\mathrm{BaseSumDP}$ instance on each $D\cap[2^{j-1}+1,2^j]$, which returns an estimate 
\begin{align*} &\widetilde{\mathrm{Sum}}\big(D\cap[2^{j-1}+1,2^j]\big) := \\
&\mathrm{Sum}\big(D\cap[2^{j-1}+1,2^j]\big)  + \mathrm{Lap}(2^j/\varepsilon).
\end{align*}
Then for any $\tau=1,2,4,\dots,U$, we estimate $\mathrm{Sum}(\mathrm{Clip}(D,\tau))$ as 
\[\widetilde{\mathrm{Sum}}\big(D\cap[1,\tau]\big) = \sum_{j=0}^{\log\tau}\widetilde{\mathrm{Sum}}\big(D\cap[2^{j-1}+1,2^j]\big).\]
Importantly, the total noise level in $\widetilde{\mathrm{Sum}}\big(D\cap[1,\tau]\big)$ is still bounded by $O(\tau/\varepsilon)$, as the noise levels from the sub-domains form a geometric series.  This ensures that the DP noise is bounded by $O(\Max(D)/\varepsilon)$, as long as we choose a $\tau\le 2\cdot \Max(D)$.  

Meanwhile, this domain partitioning lowers the total message complexity of all the $\log U +1$ $\mathrm{BaseSumDP}$ instances to $1+o(1)$.   This is because after domain partitioning, each user has a nonzero input only in one sub-domain, and the $\mathrm{BaseSumDP}$ protocol sends out $o(1)$ message when $x_i=0$.

\begin{algorithm*}[t]
\LinesNumbered 
\SetNoFillComment
\KwIn{$x_i$, $\varepsilon$, $\delta$, $\beta$, $n$, $U$}
\For{$j\gets 0,1,2,\dots,\log U$}{
    \tcc{The messages for estimating $\mathrm{Sum}\big(D\cap [2^{j-1}+1,2^j]\big)$}
    $S_{i}^{[2^{j-1}+1,2^j]}\gets \mathrm{Randomizer}\Big(x_i\cdot \mathbf{I}\big(x_i\in [2^{j-1}+1,2^j]\big),\varepsilon,\delta,n,2^j\Big)$ of $\mathrm{BaseSumDP}$
}

Send $\big\{S_{i}^{[2^{j-1}+1,2^j]}\}_{j\in\{ 0,1,2,\dots,\log U\}}$\;
\caption{Randomizer of $\mathrm{SumDP}$.}
\label{alg:sum_randomizer}
\end{algorithm*}

\begin{algorithm*}[t]
\LinesNumbered 
\SetNoFillComment
\KwIn{$\Big\{R^{[2^{j-1}+1,2^j]}=\cup_i S_{i}^{[2^{j-1}+1,2^j]}\Big\}_{j\in\{ 0,1,2,\dots,\log U\}}$ with $S_{i}^{[2^{j-1}+1,2^j]}$ from user $i$, $\varepsilon$, $\delta$, $\beta$, $n$, $U$}
$\tau\gets 0$\;
\For{$j\gets 0,1,2,\dots,\log U$}{
    $\widetilde{\mathrm{Sum}}\big(D\cap[2^{j-1}+1,2^j]\big)\gets \sum_{y\in R^{[2^{j-1}+1,2^j]}}y$\;
    \tcc{Set $\tau=2^j$ for last $j$ passing the condition of (\ref{eq:condition})}
    \If{$\widetilde{\mathrm{Sum}}\big(D\cap[2^{j-1}+1,2^j]\big)>1.3\cdot 2^j\cdot \ln\big(2(\log U+1)/\beta\big)/\varepsilon$}{
        $\tau\gets 2^j$\;
    }
}
$\widetilde{\mathrm{Sum}}(D)\gets \sum_{j\in \{0,1,2,\dots,\log \tau\}}\widetilde{\mathrm{Sum}}\big(D\cap[2^{j-1}+1,2^j]\big)$\;
\Return $\widetilde{\mathrm{Sum}}(D)$
\caption{Analyzer of $\mathrm{SumDP}$.}
\label{alg:sum_analyzer}
\end{algorithm*}

\subsection{Finding $\tau$ with No Extra Cost}
It remains to deal with the impractical $\tau$-selection protocol used in \cite{huang2021instance}, which has a message complexity of $O((\log\log U)^6 \log(1/\delta)/\varepsilon)$ and finds a $\tau$ such that
\begin{equation}
\label{eq:tau}
\mathrm{Max}(D,k)\leq \tau \leq 2\cdot {\mathrm{Max}}(D),
\end{equation}
for some $k = \tilde{O}\big((\log\log U)^{3.5}\sqrt{\log(1/\delta)}/\varepsilon\big)$.  Recall that the bias introduced by the clipping is $O(\Max(D)\cdot k)$. 

It turns out that we can find a $\tau$ that achieves the optimal $k=O(\log\log U/\varepsilon)$ with no extra cost at all!  
To illustrate the idea, first consider the non-private setting where we have access to the exact values of $\mathrm{Sum}\big(D\cap[2^{j-1}+1,2^j]\big)$ for each $j=0,1,2,\dots, \log U$.  Then it is easy to see that the last $j$ on which $\mathrm{Sum}\big(D\cap[2^{j-1}+1,2^j]\big) > 0$ yields a $\tau=2^j$ such that $\Max(D) \le \tau \le 2\cdot \Max(D)$, i.e., we can achieve \eqref{eq:tau} with $k=1$. In the private setting, however, due to having access only to the noisy estimates $\widetilde{\mathrm{Sum}}\big(D\cap[2^{j-1}+1,2^j]\big)$, we may easily overshoot: With probability at least $1/2$, the last sub-domain has $\widetilde{\mathrm{Sum}}\big(D\cap[\tau/2-1,\tau]\big)>0$ (even if it is empty), which would set $\tau=U$. 

To prevent this overshooting, our idea is to use a higher bar.  Instead of finding the last $j$ on which $\mathrm{Sum}\big(D\cap[2^{j-1}+1,2^j]\big) > 0$, we change the condition to 
\begin{equation}
    \label{eq:condition}
\widetilde{\mathrm{Sum}}\big(D\cap[2^{j-1}+1,2^j]\big)> 1.3\cdot 2^j\cdot \ln\big(2(\log U+1)/\beta\big)/\varepsilon. 
\end{equation}
The RHS of \eqref{eq:condition} follows from the error bound of $\mathrm{BaseSumDP}$ (see the remark after Lemma \ref{lm:baseline}), where we replace $U$ with $2^j$ (since the largest value in this sub-domain is $2^j$) and replace $\beta$ with $\beta/(\log U +1)$, so that when this sub-domain is empty, \eqref{eq:condition} happens with probability at most $\beta/(\log U +1)$.  Then by a union bound, with probability at least $1-\beta$, none of the empty sub-domains passes the condition \eqref{eq:condition}.  In this case, we are guaranteed to find a $\tau=2^j \le 2 \cdot \Max(D)$, namely, we will not overshoot.  Meanwhile, we can also show that
we will not undershoot too much, either.  More precisely, with probability at least $1-\beta$, there are at most $O(\log(\log U/\beta)/\varepsilon)$ elements greater than $2^j$.
Therefore, plugging $\tau=2^j$ into the clipping mechanism yields the optimal central-DP error of $O(\Max(D)\cdot \log(\log U/\beta) / \varepsilon)$.


To summarize, our final protocol works as follows.  After domain partitioning, 
each user $i$ executes an instance of $\mathrm{BaseSumDP}$ for every sub-domain $[2^{j-1}+1,2^j]$ with the input $x_i\cdot \mathbf{I}\big(x_i\in [2^{j-1}+1,2^j]\big)$ and the whole privacy budget $\varepsilon$, $\delta$.  As all the messages are shuffled together, they need to identify themselves with which $\mathrm{BaseSumDP}$ instance they belong to.  This just requires extra $O(\log\log U)$ bits.
From the perspective of the analyzer, based on the received messages, we compute $\widetilde{\mathrm{Sum}}\big(D\cap [2^{j-1}+1,2^j]\big)$ for each $j\in\{0,1,\dots,\log U\}$. 
Then, $\tau$ is set to $2^j$ for the last $j$ passing condition \eqref{eq:condition}.
Finally, we sum up all $\widehat{\mathrm{Sum}}\big(D\cap [2^{j-1}+1,2^j]\big)$ for $j\leq \log\tau$.
The detailed algorithms for the randomizer and analyzer are presented in Algorithm~\ref{alg:sum_randomizer} and Algorithm~\ref{alg:sum_analyzer}.
Besides, we give an example to demonstrate the protocol in Figure~\ref{fig:illustration}.

\begin{restatable}{theorem}{thsum}
Given any $\varepsilon>0$, $\delta>0$, $n\in\mathbb{Z}_+$, and $U\in\mathbb{Z}_+$, for any $D\in [U]^{n}$,  $\mathrm{SumDP}$ achieves the following:
\begin{enumerate}
    \item The messages received by the analyzer preserves $(\varepsilon,\delta)$-DP; 
    \item With probability at least $1-\beta$,
    the error is bounded by 
    \[O\Big(\mathrm{Max}(D)\cdot\log(\log U/\beta)/\varepsilon\Big);\]
    \item In expectation, each user sends  $1+ O\big(\log U\cdot\log^2 n\cdot \log(1/\delta)/(\varepsilon\sqrt{n})\big)$ messages with each containing $O\Big(\min\big(\log n,\log U\big)\Big)$ bits. 
\end{enumerate}
\label{th:sum}
\end{restatable}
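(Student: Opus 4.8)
The plan is to prove the three claims separately, as they decompose cleanly into (1) privacy, (2) utility, and (3) communication. For privacy, I would observe that the sub-domains $[2^{j-1}+1,2^j]$ for $j=0,1,\dots,\log U$ are pairwise disjoint, and user $i$'s contribution to the $j$-th $\mathrm{BaseSumDP}$ instance depends only on $x_i \cdot \mathbf{I}(x_i\in[2^{j-1}+1,2^j])$, i.e.\ on which sub-domain $x_i$ falls in. By Lemma~\ref{lm:baseline} part~1, the messages $R^{[2^{j-1}+1,2^j]}$ for each fixed $j$ satisfy $(\varepsilon,\delta)$-DP with respect to changing a single element. Since each element of $D$ lives in exactly one sub-domain, Lemma~\ref{lm:parallel_composition_dp} (parallel composition) applies and the union of all the message multisets is $(\varepsilon,\delta)$-DP; the analyzer's output is then a post-processing of these messages, so Lemma~\ref{lm:post_processing_dp} finishes claim~1.

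For the utility bound (claim~2), I would set up the high-probability event first: invoke Lemma~\ref{lm:baseline} part~2 with parameters $\lambda=0.1$, $\zeta=\min(0.1,0.1/\varepsilon)$, so that instance $j$ has error bounded by $1.3\cdot 2^j\ln(2(\log U+1)/\beta)/\varepsilon$ with probability at least $1-\beta/(\log U+1)$; union-bounding over the $\log U+1$ instances gives an event $E$ of probability $\geq 1-\beta$ on which every $\bigl|\widetilde{\mathrm{Sum}}(D\cap[2^{j-1}+1,2^j]) - \mathrm{Sum}(D\cap[2^{j-1}+1,2^j])\bigr|$ is within its stated bound. On $E$, I argue: (i) \emph{no overshoot} --- any $j$ with $D\cap[2^{j-1}+1,2^j]=\emptyset$ fails condition~\eqref{eq:condition}, since the noisy estimate is at most the error bound; hence the selected $\tau=2^j$ satisfies $\tau\le 2\Max(D)$, because the largest non-empty sub-domain is the one containing $\Max(D)$, whose right endpoint is at most $2\Max(D)$. (ii) \emph{bounded undershoot} --- if $\tau=2^{j_0}$ is selected, then for every $j>j_0$ condition~\eqref{eq:condition} failed, so $\mathrm{Sum}(D\cap[2^{j-1}+1,2^j]) \le \widetilde{\mathrm{Sum}}(\cdot) + (\text{error}) \le 2\cdot 1.3\cdot 2^j\ln(2(\log U+1)/\beta)/\varepsilon$; since every element counted there is $> 2^{j-1}$, the number of elements of $D$ exceeding $\tau=2^{j_0}$ is $\sum_{j>j_0} \mathrm{Count}(D\cap[2^{j-1}+1,2^j]) \le \sum_{j>j_0} \tfrac{2\cdot 1.3\cdot 2^j \ln(\cdot)/\varepsilon}{2^{j-1}} = O(\log(\log U/\beta)/\varepsilon)$ per level, but I need to be careful: summing over all $j>j_0$ this is $O(\log U \cdot \log(\log U/\beta)/\varepsilon)$, which is too weak --- so instead I should bound the count only via the \emph{single} highest non-empty level plus the observation that any level $j>j_0$ that is non-empty but fails \eqref{eq:condition} individually contributes $\le 2.6\cdot 2^j\ln(\cdot)/(\varepsilon\cdot 2^{j-1}) = O(\log(\log U/\beta)/\varepsilon)$ elements, and only the top non-empty level can be non-empty while all higher ones are empty; more carefully, the bias is $\Max(D)\cdot|\{i: x_i>\tau\}|$, and I bound $|\{i:x_i>\tau\}|$ by noting the contribution to $\mathrm{Sum}$ from levels above $j_0$ is at most $\sum_{j>j_0}2.6\cdot 2^j\ln(\cdot)/\varepsilon \le 5.2 U\ln(\cdot)/\varepsilon$ --- this still does not directly give the element count, so the clean route is: each element $x>\tau$ lies in some level $j>j_0$ with $x>2^{j-1}$, hence contributes $>2^{j-1}$ to that level's sum, so $|D\cap[2^{j-1}+1,2^j]| < \mathrm{Sum}(D\cap[2^{j-1}+1,2^j])/2^{j-1} \le 2.6\ln(2(\log U+1)/\beta)/\varepsilon$, and the total count over the at most $\log U$ higher levels is $O(\log U\log(\log U/\beta)/\varepsilon)$ --- I suspect the paper actually bounds the \emph{sum} lost, not the count, getting $\sum_{j>j_0}|\widetilde{\mathrm{Sum}} - \mathrm{Sum}| + \sum_{j>j_0}(\text{threshold}) = O(U\log(\log U/\beta)/\varepsilon)$ and then argues $\tau\ge\Max(D,k)$ for $k=O(\log(\log U/\beta)/\varepsilon)$ via the top level only. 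Finally, total error on $E$ $\le$ (DP noise in $\widetilde{\mathrm{Sum}}(D\cap[1,\tau])$, which is $\sum_{j\le\log\tau}O(2^j/\varepsilon)=O(\tau/\varepsilon)=O(\Max(D)/\varepsilon)$) $+$ (clipping bias $\Max(D)\cdot k = O(\Max(D)\log(\log U/\beta)/\varepsilon)$), giving the claimed $O(\Max(D)\log(\log U/\beta)/\varepsilon)$.

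For the communication bound (claim~3), I would apply Lemma~\ref{lm:baseline} part~3 to each of the $\log U+1$ instances. Instance $j$ sends $\mathbf{I}(x_i\in[2^{j-1}+1,2^j]) + O\bigl(\log^2 n\log(1/\delta)/(\varepsilon\lambda\sqrt{\zeta n})\bigr)$ messages in expectation. Summing over $j$: the indicator terms sum to $\mathbf{I}(x_i\ne 0)\le 1$ since $x_i$ lies in at most one sub-domain; the noise terms sum to $O\bigl(\log U\cdot\log^2 n\log(1/\delta)/(\varepsilon\sqrt{n})\bigr)$ after substituting $\lambda=0.1$ and $\zeta=\Theta(\min(1,1/\varepsilon))$. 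Each message carries its original $O(\min(\log n,\log U))$ bits plus $O(\log\log U)$ bits to tag the instance, which is absorbed. This matches the stated bound, and I would note the $1+o(1)$ claim follows when $n=\omega(\log^2 U)$ (absorbing the $\log(1/\delta)$ and $\varepsilon$ factors into the regime assumptions as the paper does elsewhere).

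\textbf{Main obstacle.} The delicate point is claim~2, specifically turning ``no level above $j_0$ passed the threshold'' into a bound on the clipping bias $\Max(D)\cdot|\{i:x_i>\tau\}|$. Naively summing per-level element-count bounds over all $\log U$ higher levels loses a $\log U$ factor, which would destroy the optimality ratio; the fix must exploit that essentially only one level (the top non-empty one) can simultaneously be non-empty and fail the threshold by a large margin, or bound the lost \emph{sum} rather than the count and compare against $\Max(D)$ directly. I expect this is where the argument requires genuine care rather than routine bookkeeping.
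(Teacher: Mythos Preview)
Your treatment of privacy (parallel composition over the disjoint sub-domains) and communication (summing the per-instance message counts from Lemma~\ref{lm:baseline}) matches the paper and is correct.

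The utility argument has a genuine gap, which you correctly flag as the main obstacle but do not resolve. You frame the bias as $\Max(D)\cdot|\{i:x_i>\tau\}|$ and try to bound the element count; as you observe, this loses a $\log U$ (or $\log \Max(D)$) factor no matter how you slice it, because the per-level count bound $|D\cap[2^{j-1}+1,2^j]|\le 5.2\ln(\cdot)/\varepsilon$ does not telescope. The paper does \emph{not} go through the count. Instead it bounds the lost \emph{sum} directly:
\[
\text{bias} \;=\; \sum_{j>\log\tau}\mathrm{Sum}\bigl(D\cap[2^{j-1}+1,2^j]\bigr).
\]
You considered this route and computed $O(U\ln(\cdot)/\varepsilon)$, which is the mistake: you summed the per-level bound $2.6\cdot 2^j\ln(\cdot)/\varepsilon$ all the way up to $j=\log U$. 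The missing observation is that every level $j>\lceil\log\Max(D)\rceil$ has $D\cap[2^{j-1}+1,2^j]=\emptyset$ and hence contributes zero to the true bias. So the geometric series really only runs up to $j=\lceil\log\Max(D)\rceil$, and is dominated by its last term:
\[
\sum_{j=\log\tau+1}^{\lceil\log\Max(D)\rceil} 2.6\cdot 2^j\ln\bigl(2(\log U+1)/\beta\bigr)/\varepsilon
\;\le\; 5.2\cdot 2^{\lceil\log\Max(D)\rceil}\ln(\cdot)/\varepsilon
\;=\; O\bigl(\Max(D)\ln(\log U/\beta)/\varepsilon\bigr).
\]
This is exactly equation~(\ref{eq:th:sum_3}) in the paper. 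Once you have this, no count bound and no $\Max(D,k)$ argument is needed at all.

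A minor correction on the noise side: after the union bound each level-$j$ error is $1.3\cdot 2^j\ln(2(\log U+1)/\beta)/\varepsilon$, not $O(2^j/\varepsilon)$, so the geometric sum of noise terms over $j\le\log\tau$ is already $O(\Max(D)\ln(\log U/\beta)/\varepsilon)$ rather than $O(\Max(D)/\varepsilon)$. This does not change the final bound but your noise term is understated.
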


\begin{proof}
For privacy, invoking Lemma~\ref{lm:baseline}, we have that for every $j\in\{0,1,2,\dots,\log U\}$, $R^{[2^{j-1}+1,2^j]}$ preserves $(\varepsilon,\delta)$-DP. Given that each $x_i$ only has an impact on a single $R^{[2^{j-1}+1,2^j]}$, it follows that the collection $\big\{R^{[2^{j-1}+1,2^j]}\big\}_{j\in{ 0,1,2,\dots,\log U}}$ preserves $(\varepsilon,\delta)$-DP.

For utility, Lemma~\ref{lm:baseline} ensures that for any $j\in \{0,1,2,\dots,\log U\}$, with probability at least $1-\frac{\beta}{\log U+2}$, we have
\begin{align}
\nonumber
&\Big|\widetilde{\mathrm{Sum}}\big(D\cap[2^{j-1}+1,2^j]\big)-{\mathrm{Sum}}\big(D\cap[2^{j-1}+1,2^j]\big)\Big|
\\
\leq & 1.3\cdot 2^j\cdot \ln\big(2(\log U+2)/\beta\big)/\varepsilon.
\label{eq:th:sum_1}
\end{align}
Aggregating probabilities across all $j$ yields that, with probability at least $1-\beta$, (\ref{eq:th:sum_1}) holds across all $j$. 

First, (\ref{eq:th:sum_1}) implies $\tau$ will not surpass $\Big\lceil\log\big(\mathrm{Max}(D)\big)\Big\rceil$:
\begin{equation*}
\tau \leq \Big\lceil\log\big(\mathrm{Max}(D)\big)\Big\rceil.
\end{equation*}
Combining this with (\ref{eq:th:sum_1}), we have
{
\small
\begin{align}
\nonumber
&\Big|\sum_{j=0}^{\log(\tau)}\big(\widetilde{\mathrm{Sum}}\big(D\cap[2^{j-1}+1,2^j]\big)-{\mathrm{Sum}}\big(D\cap[2^{j-1}+1,2^j]\big)\big)\Big|
\\
\label{eq:th:sum_2}
= & O \Big(\mathrm{Max}(D)\cdot \log(\log U/\beta)/\varepsilon\Big).
\end{align}
}

Meanwhile, with (\ref{eq:th:sum_1}), we also have that all sub-domains over $\tau$ will not contain too many elements: for any $j> \log(\tau)$,
\begin{equation*}
{\mathrm{Sum}}\big(D\cap[2^{j-1}+1,2^j]\big)\leq 2.6\cdot 2^j\cdot \ln\big(2(\log U+2)/\beta\big)/\varepsilon,
\end{equation*}
which sequentially deduces
\begin{align}
\nonumber
&\sum_{j=\log\tau+1}^{\log U} {\mathrm{Sum}}\big(D\cap[2^{j-1}+1,2^j]\big)  
\\
= & O \Big(\mathrm{Max}(D)\cdot \log(\log U/\beta)/\varepsilon\Big).
\label{eq:th:sum_3}
\end{align}

Finally, (\ref{eq:th:sum_2}), and (\ref{eq:th:sum_3}) lead to our desired statement for the utility.

The statement for communication directly follows from Lemma~\ref{lm:baseline} and the observation that each $x_i$ uniquely corresponds to a single interval $[2^{j-1}+1,2^j]$.
\end{proof}

Additionally, each randomizer incurs a computational cost of $O\big(\log U \cdot \min(U,\sqrt{n})\big)$ and each analyzer operates with a running time of $O(n)$.

\section{High-Dimensional Sum Estimation}
\label{sec:highD}
In this section, we consider the high-dimensional scenario, i.e., each $x_i$ is a $d$-dimensional vector in $\mathbb{Z}^d$ with $\ell_2$ norm bounded by some given (potentially large) $U_{\ell_2}$.  Thus, $D$ can also be thought of as an $n\times d$ matrix. Let $\mathrm{Max}_{\ell_2}(D) := \max_{x_i \in D}\|x_i\|_2$ be the maximum $\ell_2$ norm among the elements (columns) of $D$.  The goal is to estimate $\mathrm{Sum}(D) = \sum_{i}x_i$ with small $\ell_2$ error.  For each $x_i\in \mathbb{Z}^d$ and any $k\in [d]$, we use $x_i^k$ to denote its $k$-th coordinate.

In the central model, the standard \textit{Gaussian mechanism} achieves an error of 
$O\big(U_{\ell_2}\sqrt{d\log(1/\delta)}/\varepsilon\big)$, which is worst-case optimal up to logarithmic factors \cite{kamath2020primer}.   The best clipping mechanism for this problem \cite{dong2023better}  achieves an error of \[O\Big(\mathrm{Max}_{\ell_2}(D)\cdot \big(\sqrt{d\log(1/\delta)}+\log\log(U_{\ell_2})\big)/\varepsilon\Big).\] 

In the shuffle model, 
\cite{huang2021instance} presented a two-round protocol achieving a (theoretically) similar bound:
\begin{align*}
\tilde{O}\bigg(\mathrm{Max}_{\ell_2}(D)\cdot \Big(\sqrt{d\log(nd)\log(1/\delta)}
\\
+\log^{3.5}U_{\ell_2}\cdot\sqrt{\log(1/\delta)}\Big)/\varepsilon\bigg).  
\end{align*}
But similar to their 1D protocol, this algorithm is not practical due to the $\log^{3.5}U_{\ell_2}$ factor and the use of the complicated range-counting shuffle-DP protocol of \cite{ghazi2021power}.

In this section, we present a simple and practical one-round shuffle-DP protocol that achieves an error of 
\[O\Big(\mathrm{Max}_{\ell_2}(D)\cdot\sqrt{d\log(nd)\log(1/\delta)} \cdot\log\log(U_{\ell_2})/\varepsilon\Big).\]
%

\begin{figure*}[t]
\centering
\includegraphics[width=1\textwidth]{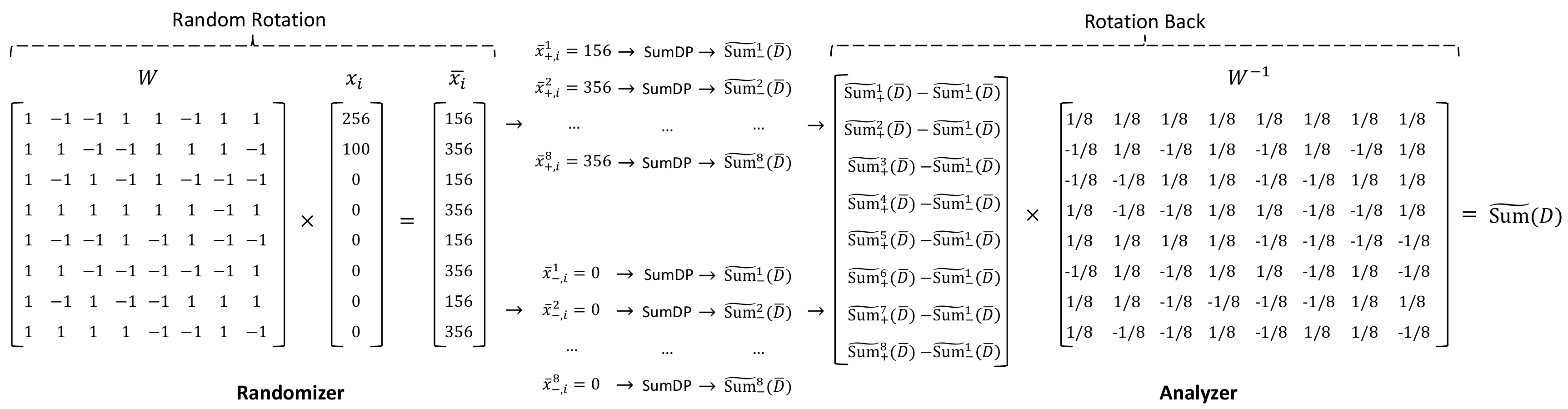}
    \caption{An illustration of our protocol for high-dimensional sum estimation. $d=8$.}
    \label{fig:illustration_high_dim}
\end{figure*}

\begin{algorithm*}[t]
\LinesNumbered 
\KwIn{$x_i\in \mathbb{Z}_{\geq 0}^d$, $\varepsilon$, $\delta$, $\beta$, $n$, $U_{\ell_2}$, $W$}
$\Bar{x}_i\gets W x_i$\;
$\Bar{x}_{+,i}\gets \big(\Bar{x}_{+,i}^{1},\Bar{x}_{+,i}^{2},\dots,\Bar{x}_{+,i}^{d}\big)$ with $\Bar{x}_{+,i}^{k}\gets \min\Big(\Bar{x}_{i}^{1}\cdot \mathbf{I}(\Bar{x}_{i}^{1}>0),U_{\ell_2}\sqrt{2\log(8nd/\beta)}\Big)$ for any $k\in[d]$\;
$\Bar{x}_{-,i}\gets \big(\Bar{x}_{-,i}^{1},\Bar{x}_{-,i}^{2},\dots,\Bar{x}_{-,i}^{d}\big)$ with $\Bar{x}_{-,i}^{k}\gets \min\Big(-\Bar{x}_{i}^{1}\cdot \mathbf{I}(\Bar{x}_{i}^{1}<0),U_{\ell_2}\sqrt{2\log(8nd/\beta)}\Big)$ for any $k\in[d]$\;
$\varepsilon',\delta'\gets \varepsilon/\big(4\sqrt{d\log(2/\delta)}\big),\delta/(4d)$\;
\For{$k\gets 1,2,\dots,d$}{
    $S_{+,i}^{k}\gets \mathrm{Randomizer}\Big(\bar{x}^k_{+,i},\varepsilon',\delta',n,\beta/(2d),U_{\ell_2}\sqrt{2\log(8nd/\beta)}\Big)$ of $\mathrm{SumDP}$\;
    $S_{-,i}^{k}\gets \mathrm{Randomizer}\Big(\bar{x}^k_{-,i},\varepsilon',\delta',n,\beta/(2d),U_{\ell_2}\sqrt{2\log(8nd/\beta)}\Big)$ of $\mathrm{SumDP}$\;
}
Send $\{S_{-,i}^{k}\}_{k\in[d]}$, $\{S_{+,i}^{k}\}_{k\in[d]}$\;
\caption{Randomizer of $\mathrm{HighDimSumDP}$.}
\label{alg:high_dim_sum_randomizer}
\end{algorithm*}

\begin{algorithm*}[t]
\LinesNumbered 
\KwIn{$\{R^k_+=\cup_i S_{+,i}^{k}\}_{k\in[d]}$ and $\{R^k_-=\cup_i S_{-,i}^{k}\}_{k\in[d]}$ with $\{S_{+,i}^{k}\}_{k\in[d]}$ and $\{S_{-,i}^{k}\}_{k\in[d]}$ from user $i$, $\varepsilon$, $\delta$, $\beta$, $n$, $U_{\ell_2}$, $W$}
$\varepsilon',\delta'\gets \varepsilon/\big(4\sqrt{d\log(2/\delta)}\big),\delta/(4d)$\;
\For{$k\gets 1,2,\dots,d$}{
    $\widetilde{\mathrm{Sum}}_+^k(\overline{D})\gets \mathrm{Analyzer}\Big(R^k_+,\varepsilon',\delta',\beta/(2d),U_{\ell_2}\sqrt{2\log(8nd/\beta)}\Big)$ of $\mathrm{SumDP}$\;
    $\widetilde{\mathrm{Sum}}_-^k(\overline{D})\gets \mathrm{Analyzer}\Big(R^k_-,\varepsilon',\delta',\beta/(2d),U_{\ell_2}\sqrt{2\log(8nd/\beta)}\Big)$ of $\mathrm{SumDP}$\;
}
$\widetilde{\mathrm{Sum}}(\overline{D})\gets \Big(\widetilde{\mathrm{Sum}}_+^1(\overline{D})-\widetilde{\mathrm{Sum}}_-^1(\overline{D}),\widetilde{\mathrm{Sum}}_+^2(\overline{D})-\widetilde{\mathrm{Sum}}_-^2(\overline{D}),\dots, \widetilde{\mathrm{Sum}}_+^d(\overline{D})-\widetilde{\mathrm{Sum}}_-^d(\overline{D})\Big)$\;
$\widetilde{\mathrm{Sum}}(D)\gets W^{-1} \widetilde{\mathrm{Sum}}(\overline{D})$\;
\Return $\widetilde{\mathrm{Sum}}(D)$
\caption{Analyzer of $\mathrm{HighDimSumDP}$.}
\label{alg:high_dim_sum_analyzer}
\end{algorithm*}

\subsection{Random Rotation}
As in \cite{huang2021instance}, we first perform a random rotation of the dataset $D$, resulting in $\overline{D} = WD$. Here, $W$ denotes a rotation matrix, constructed as per the following lemma:

\begin{lemma}[\cite{ailon2009fast}]
\label{lm:rotation}
Let $W = HP$, where $H$ is the Hadamard matrix and $P$ is a diagonal matrix whose diagonal entry is independently and randomly sampled from $\{-1,+1\}$.  Then, for any $x\in \mathbb{Z}_{\geq 0}^d$, and any $\beta>0$, we have
\begin{enumerate}
    \item $Wx \in \mathbb{Z}^d$ and $\|Wx\|_2 = \sqrt{d} \|x\|_2$\;
    \item 
    \[\mathsf{Pr}\Big[\big\|Wx\big\|_{\infty}\geq \|x\|_2\cdot \sqrt{2\log(4d/\beta)}\Big]\leq \beta.\]
\end{enumerate}
\end{lemma}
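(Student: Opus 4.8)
The plan is to verify the two claims directly from the structure $W = HP$, taking $H$ to be the un-normalized $d \times d$ Hadamard matrix with all entries in $\{-1,+1\}$ (so that $d$ is implicitly assumed to be a power of two; otherwise one pads $x$ with zeros up to the next power of two, which changes neither $\|x\|_2$ nor the relevant norms of $Wx$).

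For the first claim, integrality is immediate: both $P$ and $H$ have all entries in $\{-1,+1\}$, so $Wx = HPx \in \mathbb{Z}^d$ whenever $x \in \mathbb{Z}_{\geq 0}^d$. For the norm identity, I would invoke the defining property $H^\top H = dI$ of the Hadamard matrix together with $P^\top P = I$ (each diagonal entry of $P$ squares to $1$), which gives $\|Wx\|_2^2 = x^\top P^\top H^\top H P x = d\, x^\top P^\top P x = d\|x\|_2^2$, i.e. $\|Wx\|_2 = \sqrt d\,\|x\|_2$.

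For the second claim, fix a coordinate $k \in [d]$ and write $(Wx)_k = \sum_{j=1}^d H_{kj}\,\sigma_j\, x_j$, where $\sigma_1,\dots,\sigma_d$ are the i.i.d.\ uniform $\{-1,+1\}$ diagonal entries of $P$. Since each $H_{kj}$ is a fixed sign, the products $\eta_j := H_{kj}\sigma_j$ are again i.i.d.\ uniform $\{-1,+1\}$ for this fixed $k$, so $(Wx)_k = \sum_j \eta_j x_j$ is a Rademacher sum with variance proxy $\sum_j x_j^2 = \|x\|_2^2$. The standard sub-Gaussian (Hoeffding) tail bound then gives $\Pr\big[\,|(Wx)_k| \ge t\,\big] \le 2\exp\!\big(-t^2/(2\|x\|_2^2)\big)$, and choosing $t = \|x\|_2\sqrt{2\log(4d/\beta)}$ makes the right-hand side at most $\beta/(2d)$. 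A union bound over the $d$ coordinates then yields $\Pr\big[\,\|Wx\|_\infty \ge t\,\big] \le \beta/2 \le \beta$, as claimed.

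The argument is essentially routine; the only points that need care are bookkeeping ones. First, the Hadamard normalization convention must be the $\pm1$-valued version, since this is what keeps $Wx$ integral and is what makes the implicit power-of-two assumption on $d$ relevant. Second, one should be careful that the coordinates of $Wx$ are \emph{not} mutually independent (they all share the same signs $\sigma_j$), so the final step must be a plain union bound rather than anything sharper; the weaker fact that is actually needed — that for a fixed row $k$ the products $H_{kj}\sigma_j$ are independent across $j$ — does hold, and is exactly what legitimizes the per-coordinate sub-Gaussian bound.
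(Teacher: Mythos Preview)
Your argument is correct and is exactly the standard proof of this fact from the Ailon--Chazelle fast Johnson--Lindenstrauss work that the paper cites; the paper itself does not supply a proof of this lemma but merely imports it as a black box. One minor remark: the paper's convention is that $\log$ has base $2$, which only makes the stated threshold $\|x\|_2\sqrt{2\log(4d/\beta)}$ larger than the natural-log threshold your Hoeffding computation actually needs, so the bound holds a fortiori.
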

The first property means, matrix $W$ performs a rotation while preserving the integer domain, aside from scaling the vector by a factor of $\sqrt{d}$.  The second property ensures that the random rotation spreads out the norm evenly across all dimensions.
After the rotation, \cite{huang2021instance} clips each coordinate to $O(U_{\ell_2}\log(nd))$ and invokes $\mathrm{BaseSumDP}$ with bounded domain size $O(U_{\ell_2}\log(nd))$ in each dimension.
To guarantee DP, they use advanced composition to allocate each dimension with the privacy budget $\varepsilon' = \varepsilon/\big(2\sqrt{d\log(2/\delta)}\big)$, $\delta' = \delta/(2d)$.
This approach results in an error of $\tilde{O}\Big( U_{\ell_2} d\sqrt{\log n\log(1/\delta)}/\varepsilon \Big)$ for the estimation of $\mathrm{Sum}\big(\overline{D}\big)$.  
Upon reorienting to the original domain with $W^{-1}$, the estimation of $\mathrm{Sum}(D)$ has error $\tilde{O}\Big( U_{\ell_2} \sqrt{d\log n\log(1/\delta)}/\varepsilon \Big)$.  
Note that the randomness in $W$ is only needed for the utility analysis, and does not affect privacy, so it can be derived from public randomness.

\subsection{Extending $\mathrm{SumDP}$ to High Dimensions}

To adapt $\mathrm{SumDP}$ to high dimensions, one na\"ive approach is to use the advanced composition to divide the privacy budget and apply $\mathrm{SumDP}$ to each dimension.  This will lead to an error of 
\begin{equation}
\label{eq:naive_high_dim_error}
\Tilde{O}\left(\sqrt{d\log(1/\delta)\cdot\log\log U_{\ell_2}\cdot\sum_{k=1}^d\mathrm{Max}(D^{(k)})}\right),
\end{equation}
where $\mathrm{Max}(D^{(k)})$ is the maximum value in the $k$ dimension of $D$.  Since $\sqrt{\sum_{k=1}^d\mathrm{Max}(D^{(k)})}$ can be as large as $\sqrt{d}\cdot \mathrm{Max}_{\ell_2}(D)$, (\ref{eq:naive_high_dim_error}) has a $\sqrt{d}$ degradation compared with the optimal error.

To achieve our error with a dependency on $\sqrt{d}$,
we apply a rotation as in Lemma~\ref{lm:rotation}, and then clip each coordinate to the range $[-c\cdot U_{\ell_2}\log(nd),c\cdot U_{\ell_2}\log(nd)]$ for some constant $c$. 
One would then apply $\mathrm{SumDP}$ in each dimension. However, after the rotation, the resulting domain of $\overline{D} = HD$ spans both positive and negative integers.  Note that $\mathrm{SumDP}$ only works on the non-negative integer domain, because its utility guarantee is based on the property that, clipping elements should only make the sum smaller, which is not true if negative numbers are present. 

A simplistic strategy is to shift the domain from $[-c\cdot U_{\ell_2}\log(nd),c\cdot U_{\ell_2}\log(nd)]$ to $[0,2c\cdot U_{\ell_2}\log(nd)]$, and then apply $\mathrm{SumDP}$. However, this shift escalates the maximum value for each coordinate to $\ge c\cdot U_{\ell_2}\log(nd)$, potentially inducing an error proportional to $U_{\ell_2}$ in the estimation of $\mathrm{Sum}(\overline{D})$.
To fix this issue, we process the positive and negative domains separately, and then take their difference as the final estimate for $\mathrm{Sum}(\overline{D})$. 
We call this algorithm $\mathrm{HighDimSumDP}$.
The detailed algorithms for the randomizer and analyzer are shown in Algorithm~\ref{alg:high_dim_sum_randomizer} and Algorithm~\ref{alg:high_dim_sum_analyzer}.
We also show an example in Figure~\ref{fig:illustration_high_dim}.

\begin{restatable}{theorem}{thhighdimsum}
\label{th:high_dim_sum}
Given any $\varepsilon>0$, $\delta>0$, $n\in\mathbb{Z}_+$, and $U\in\mathbb{Z}_+$, for any $D\in \mathbb{Z}^{n\times d}$ with $\mathrm{Max}_{\ell_2}(D)\leq U$, we have
\begin{enumerate}
    \item The messages received by the analyzer preserves $(\varepsilon,\delta)$-DP; 
    \item With probability at least $1-\beta$,
    the $\ell_2$ error is bounded by 
\begin{align*}
O\Big(\mathrm{Max}_{\ell_2}(D)\cdot\sqrt{d\log(nd/\beta)\log(1/\delta)} \\\cdot\log(d\log(U_{\ell_2})/\beta)/\varepsilon\Big);
\end{align*}
    \item In expectation, each user sends 
    \begin{align*}
    d+ O\big(d^{1.5}\cdot\log(U_{\ell_2}\log(nd/\beta))\\\cdot \log^{1.5}(d/\delta)\cdot\log^2 n/(\varepsilon\sqrt{n})\big)
    \end{align*}
 messages with each message containing $O\Big(\log d+\min\big(\log n,\log(U_{\ell_2})\big)$ bits. 
\end{enumerate}
\end{restatable}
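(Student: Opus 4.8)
The plan is to treat $\mathrm{HighDimSumDP}$ as $2d$ parallel invocations of $\mathrm{SumDP}$ and inherit all three guarantees from Theorem~\ref{th:sum}. The one global event I condition on comes from Lemma~\ref{lm:rotation}(2): with probability at least $1-\beta/2$, simultaneously for all $i\in[n]$, $\|\bar x_i\|_\infty=\|Wx_i\|_\infty\le\|x_i\|_2\sqrt{2\log(8nd/\beta)}\le\mathrm{Max}_{\ell_2}(D)\sqrt{2\log(8nd/\beta)}$ (Lemma~\ref{lm:rotation}(2) with per-user failure $\beta/(2n)$ plus a union bound). Call this event $\mathcal{E}$. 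On $\mathcal{E}$ the outer clip at threshold $U_{\ell_2}\sqrt{2\log(8nd/\beta)}\ge\mathrm{Max}_{\ell_2}(D)\sqrt{2\log(8nd/\beta)}$ never fires, so $\bar x_{+,i}^k=\max(\bar x_i^k,0)$, $\bar x_{-,i}^k=\max(-\bar x_i^k,0)$, and hence $\bar x_{+,i}^k-\bar x_{-,i}^k=\bar x_i^k$ exactly; this is what makes the recombination in the analyzer unbiased.

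\textbf{Privacy.} The rotation $W$ is drawn from public randomness and both clipping maps are deterministic post-processing of a user's own $x_i$, so neither affects privacy. Each of the $2d$ $\mathrm{SumDP}$ instances is run with budget $(\varepsilon',\delta')=(\varepsilon/(4\sqrt{d\log(2/\delta)}),\delta/(4d))$ and is $(\varepsilon',\delta')$-DP by Theorem~\ref{th:sum}(1). Since a single change of some $x_i$ can affect the input of all $2d$ instances, I apply advanced composition (Lemma~\ref{lm:sequential_composition_dp}(2)) with $k=2d$ and $\delta''=\delta/2$: the total $\delta$ is $2d\delta'+\delta''=\delta$, and the total $\varepsilon$ is $\varepsilon'\sqrt{4d\log(2/\delta)}+2d\varepsilon'(e^{\varepsilon'}-1)=\varepsilon/2+2d\varepsilon'(e^{\varepsilon'}-1)$, where the second term is lower order because $\delta\ll 1/n$ forces $\varepsilon'$ to be tiny; so the shuffled messages are $(\varepsilon,\delta)$-DP, and the final $W^{-1}$ is post-processing (Lemma~\ref{lm:post_processing_dp}).

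\textbf{Utility.} Fix a coordinate $k$ and a sign $s\in\{+,-\}$ and set $\overline{D}^k_s:=\{\bar x^k_{s,i}\}_{i\in[n]}$; on $\mathcal{E}$ this is a non-negative dataset with $\mathrm{Max}(\overline{D}^k_s)\le\mathrm{Max}_{\ell_2}(D)\sqrt{2\log(8nd/\beta)}$. Theorem~\ref{th:sum}(2), applied with parameters $\varepsilon',\delta'$, failure probability $\beta/(2d)$, and domain bound $U_{\ell_2}\sqrt{2\log(8nd/\beta)}$, gives with probability $\ge 1-\beta/(2d)$
\[
\bigl|\widetilde{\mathrm{Sum}}_s^k(\overline{D})-\mathrm{Sum}(\overline{D}^k_s)\bigr|=O\!\Bigl(\mathrm{Max}_{\ell_2}(D)\sqrt{\log(nd/\beta)}\cdot\log\bigl(d\log(U_{\ell_2})/\beta\bigr)\cdot\tfrac{\sqrt{d\log(1/\delta)}}{\varepsilon}\Bigr),
\]
using $1/\varepsilon'=\Theta(\sqrt{d\log(1/\delta)}/\varepsilon)$ and absorbing $\log(\log(U_{\ell_2}\sqrt{\log(nd/\beta)})\cdot 2d/\beta)$ into $\log(d\log(U_{\ell_2})/\beta)$. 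A union bound over the $2d$ instances together with $\mathcal{E}$ makes all these bounds, and the identity $\bar x_{+,i}^k-\bar x_{-,i}^k=\bar x_i^k$, hold with probability $\ge 1-\beta$. On this event $|\widetilde{\mathrm{Sum}}^k(\overline{D})-\mathrm{Sum}(\overline{D})^k|\le|\widetilde{\mathrm{Sum}}_+^k(\overline{D})-\mathrm{Sum}(\overline{D}^k_+)|+|\widetilde{\mathrm{Sum}}_-^k(\overline{D})-\mathrm{Sum}(\overline{D}^k_-)|$ is at most twice the display above for every $k$; taking the $\ell_2$ norm over the $d$ coordinates contributes a factor $\sqrt d$, and then $\widetilde{\mathrm{Sum}}(D)=W^{-1}\widetilde{\mathrm{Sum}}(\overline{D})$ together with $\mathrm{Sum}(D)=W^{-1}\mathrm{Sum}(\overline{D})$ and $\|W^{-1}v\|_2=\|v\|_2/\sqrt d$ (Lemma~\ref{lm:rotation}(1)) divides the $\ell_2$ error by $\sqrt d$, yielding the claimed bound $O(\mathrm{Max}_{\ell_2}(D)\sqrt{d\log(nd/\beta)\log(1/\delta)}\cdot\log(d\log(U_{\ell_2})/\beta)/\varepsilon)$.

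\textbf{Communication and the main obstacle.} Each user runs $2d$ copies of the $\mathrm{SumDP}$ randomizer with domain bound $U'=U_{\ell_2}\sqrt{2\log(8nd/\beta)}$ (so $\log U'=O(\log(U_{\ell_2}\log(nd/\beta)))$) and parameters $\varepsilon',\delta'$; by Theorem~\ref{th:sum}(3) the $k$-th copy sends $\mathbf{I}(\text{its input}\neq 0)+O(\log U'\cdot\log^2 n\cdot\log(1/\delta')/(\varepsilon'\sqrt n))$ messages in expectation. Because $\bar x_{+,i}^k$ and $\bar x_{-,i}^k$ are never both nonzero, the nonzero-input indicators over the $2d$ copies sum to $\sum_k\mathbf{I}(\bar x_i^k\neq 0)\le d$; substituting $1/\varepsilon'=O(\sqrt{d\log(1/\delta)}/\varepsilon)$ and $\log(1/\delta')=O(\log(d/\delta))$ into the remaining $2d$ copies of the lower-order term gives $O(d^{1.5}\log(U_{\ell_2}\log(nd/\beta))\log^{1.5}(d/\delta)\log^2 n/(\varepsilon\sqrt n))$, for the stated message count, and each message carries an $O(\log d)$-bit instance label on top of the $O(\min(\log n,\log U_{\ell_2}))$-bit $\mathrm{SumDP}$ payload. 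The genuinely delicate point is the utility bookkeeping around the single $\sqrt d$ factor: the random rotation is precisely what keeps $\mathrm{Max}(\overline{D}^k_s)$ bounded by $\mathrm{Max}_{\ell_2}(D)$ up to a $\sqrt{\log(nd/\beta)}$ factor \emph{uniformly over all $d$ coordinates}, so the $d$ per-coordinate errors aggregate in $\ell_2$ to only $d\cdot\mathrm{Max}_{\ell_2}(D)\cdot(\text{polylog})/\varepsilon$, which the $\sqrt d$-scaling of $W$ then lets $W^{-1}$ shrink back to $\sqrt d\cdot\mathrm{Max}_{\ell_2}(D)\cdot(\text{polylog})/\varepsilon$; without the rotation one would have only the crude per-coordinate bound $\mathrm{Max}(D^{(k)})\le\mathrm{Max}_{\ell_2}(D)$ and no $W^{-1}$ to exploit, costing an extra $\sqrt d$. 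One must also use $\mathcal{E}$ to certify that the outer clip introduces no bias (so the $\pm$ estimates recombine exactly) and carefully track the parameter substitutions ($\beta\mapsto\beta/(2d)$, $U_{\ell_2}\mapsto U'$, $\varepsilon\mapsto\varepsilon'$, $\delta\mapsto\delta'$) through Theorem~\ref{th:sum} without leaking a polynomial factor.
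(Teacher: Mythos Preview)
Your proposal is correct and follows essentially the same route as the paper: advanced composition over the $2d$ $\mathrm{SumDP}$ instances for privacy; the rotation bound (Lemma~\ref{lm:rotation}) to ensure the outer clip is inactive and to control $\mathrm{Max}(\overline{D}^k_s)$ uniformly in $k$; per-coordinate application of Theorem~\ref{th:sum} with a union bound; and the $1/\sqrt d$ scaling of $W^{-1}$ to cancel the $\sqrt d$ from aggregating $d$ coordinatewise errors in $\ell_2$. Your bookkeeping is in fact a bit more explicit than the paper's (e.g., you spell out the per-user failure $\beta/(2n)$ in the rotation union bound and the second-order term in advanced composition), but the argument is the same.
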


\begin{proof} 
For privacy, invoking Theorem~\ref{th:sum}, we deduce that each of $R_+^k$ and $R_-^k$ adheres to $\Big(\frac{\varepsilon}{4\sqrt{d\log(2/\delta)}},\frac{\delta}{4d}\Big)$-DP. By advanced composition, the collections of $\{R_+^k\}{k\in[d]}$ and $\{R-^k\}_{k\in[d]}$ maintain $(\varepsilon,\delta)$-DP.

Regarding utility, Lemma~\ref{lm:rotation} guarantees, with probability at least $1-\frac{2}{\beta}$, for every $i\in[n]$ and $j\in[d]$, we have
\begin{equation}
|\Bar{x}_i^{k}| \leq \mathrm{Max}_{\ell_2}(D) \cdot \sqrt{2\log(8nd/\beta)},
\label{eq:th:high_dim_sum_1}
\end{equation}
which further implies
\begin{equation}
\label{eq:th:high_dim_sum_2}
\sum_{i}\Bar{x}_i = \sum_i\Bar{x}_{+,i}-\sum_i\Bar{x}_{-,i}.
\end{equation}

Subsequently, Theorem~\ref{th:sum} coupled with Equation~\ref{eq:th:high_dim_sum_1} implies that for each $k\in[d]$, with probability at least $1-\frac{\beta}{2d}$,
\begin{align}
\nonumber
\Big|\widetilde{\mathrm{Sum}}_+^k(\overline{D})&-\sum_i\Bar{x}_{+,i}^k\Big| = O\Big(\mathrm{Max}_{\ell_2}(D)\cdot\sqrt{d\log(1/\delta)}
\\
\label{eq:th:high_dim_sum_3}
& \cdot \sqrt{\log(nd/\beta)}\cdot\log(d\log(U_{\ell_2})/\beta)/\varepsilon\Big) 
\end{align}
and
\begin{align}
\nonumber
\Big|\widetilde{\mathrm{Sum}}_-^k(\overline{D})&-\sum_i\Bar{x}_{-,i}^k\Big| = O\Big(\mathrm{Max}_{\ell_2}(D)\cdot\sqrt{d\log(1/\delta)}
\\
\label{eq:th:high_dim_sum_4}
& \cdot \sqrt{\log(nd/\beta)}\cdot\log(d\log(U_{\ell_2})/\beta)/\varepsilon\Big) 
\end{align}
Combining the probabilities across all $k\in[d]$, we have with probability at least $1-\frac{\beta}{2}$, (\ref{eq:th:high_dim_sum_3}) and (\ref{eq:th:high_dim_sum_4}) hold for all dimensions.  

By synthesizing (\ref{eq:th:high_dim_sum_2}), (\ref{eq:th:high_dim_sum_3}), and (\ref{eq:th:high_dim_sum_4}), we
\begin{align*}
&\Big\|\widetilde{\mathrm{Sum}}(\overline{D})-\mathrm{Sum}(\overline{D})\Big\|_2
\\
=&\frac{1}{\sqrt{d}}\cdot \Big\|\widetilde{\mathrm{Sum}}(\overline{D})-\mathrm{Sum}(\overline{D})\Big\|_2 
\\
=& O\Big(\mathrm{Max}_{\ell_2}(D)\cdot\sqrt{d\log(nd/\beta)\log(1/\delta)}
\\
&\cdot\log(d\log(U_{\ell_2})/\beta)/\varepsilon\Big) 
\end{align*}

Finally, our assertion on communication cost derives from (\ref{eq:th:high_dim_sum_1}), Theorem~\ref{th:sum}, along with the observation that for any $i\in[n]$ and $k\in[d]$, either $\Bar{x}_{+,i}^k$ or $\Bar{x}_{-,i}^k$ is necessarily zero.
\end{proof}

\begin{algorithm*}[t]
\LinesNumbered 
\SetNoFillComment
\KwIn{$x_i\in \{0,1\}^d$, $\varepsilon$, $\delta$, $\beta$, $n$}
\For{$j\gets 0,1,2,\dots,\log d$}{
    \tcc{The messages for counting vectors with sparsity between $2^{j-1}+1$ and $2^j$}
    $S_{\mathrm{cnt},i}^{[2^{j-1}+1,2^j]}\gets\mathrm{Randomizer}\Big(\mathbf{I}\big(|x_i|_1\in [2^{j-1}+1,2^j]\big),\varepsilon/2,\delta/2,n,1\Big)$ of $\mathrm{BaseSumDP}$\;
    \tcc{The messages for sum for vectors with sparsity between $2^{j-1}+1$ and $2^j$}
    $\varepsilon',\delta'\gets \varepsilon/\big(2\sqrt{ 2^{j+1}\log(2/\delta)}\big),\delta/(2^{j+1})$\;
    \For{$k\gets 1,2,\dots,d$}{
        $S_{\mathrm{sum},i}^{[2^{j-1}+1,2^j],k}\gets\text{Randomizer}\big(x_i^k\cdot\mathbf{I}(|x_i|_1\in [2^{j-1}+1,2^j]),\varepsilon',\delta',n,1\big)$ of $\mathrm{BaseSumDP}$\;
    }
}
Send $\big\{S_{\mathrm{sum},i}^{[2^{j-1}+1,2^j],k}\}_{j\in\{ 0,1,2,\dots,\log d\},k\in[d]}$ and $\big\{S_{\mathrm{cnt},i}^{[2^{j-1}+1,2^j]}\}_{j\in\{ 0,1,2,\dots,\log d\}}$\;
\caption{Randomizer of $\mathrm{SparVecSumDP}$.}
\label{alg:sparse_vector_randomizer}
\end{algorithm*}

\begin{algorithm*}[t]
\LinesNumbered 
\KwIn{$\big\{R_{\mathrm{sum}}^{[2^{j-1}+1,2^j],k}\}_{j\in\{ 0,1,2,\dots,\log d\},k\in[d]}= \cup_i S_{\mathrm{sum},i}^{[2^{j-1}+1,2^j],k}\}_{j\in\{ 0,1,2,\dots,\log d\},k\in[d]}$ and $\big\{R_{\mathrm{cnt}}^{[2^{j-1}+1,2^j]}=\cup_i S_{\mathrm{cnt},i}^{[2^{j-1}+1,2^j]}\}_{j\in\{ 0,1,2,\dots,\log d\}}$, $\varepsilon$, $\delta$, $\beta$, $n$}
$\tau\gets 0$\;
\For{$j\gets 0,1,2,\dots,\log d$}{
    $\widetilde{\mathrm{Count}}\big(D[2^{j-1}+1,2^j]\big)\gets \mathrm{Analyzer}\Big(R_{\mathrm{cnt}}^{[2^{j-1}+1,2^j]},\varepsilon/2,\delta/2,n,1\Big)$ of $\mathrm{BaseSumDP}$\;
    $\varepsilon',\delta'\gets \varepsilon/\big(2\sqrt{ 2^{j+1}\log(2/\delta)}\big),\delta/(2^{j+1})$\;
    \For{$k\gets 1,2,\dots,d$}{
        $\widetilde{\mathrm{Sum}}^k\big(D[2^{j-1}+1,2^j]\big)\gets \mathrm{Analyzer}\Big(R_{\mathrm{sum}}^{[2^{j-1}+1,2^j],k},\varepsilon',\delta',n,1\Big)$ of $\mathrm{BaseSumDP}$\;
    }
    $\widetilde{\mathrm{Sum}}\big(D[2^{j-1}+1,2^j]\big)\gets\Big(\widetilde{\mathrm{Sum}}^1\big(D[2^{j-1}+1,2^j]\big),\widetilde{\mathrm{Sum}}^2\big(D[2^{j-1}+1,2^j]\big),\dots,\widetilde{\mathrm{Sum}}^d\big(D[2^{j-1}+1,2^j]\big)\Big)$\;
    \If{$\widetilde{\mathrm{Count}}\big(D[2^{j-1}+1,2^j]\big) > 1.3\cdot \frac{2}{\varepsilon}\cdot\log(2(\log d +1)/\beta)$}{
        $\tau\gets 2^j$\;
    }
}

$\widetilde{\mathrm{Sum}}(D)\gets\sum_{j\in\{0,1,2,\dots,\log \tau\}}\widetilde{\mathrm{Sum}}\big(D[2^{j-1}+1,2^j]\big)$\;
\caption{Analyzer of $\mathrm{SparVecSumDP}$.}
\label{alg:sparse_vector_analyzer}
\end{algorithm*}

\section{Sparse Vector Aggregation}
\label{sec:sparse}

As the last application of our technique, we study the sparse vector aggregation problem.  In this problem, each $x_i$ is a binary vector in $\{0,1\}^d$. We use $(\mathrm{Max}_{\ell_2}(D))^2 =\max_i \|x_i\|_1$ to quantify the data's sparsity and are interested in the sparse case where $(\mathrm{Max}_{\ell_2}(D))^2\ll d$.
We want to estimate $\mathrm{Sum}(D)$ with an $\ell_\infty$ error $\Max_{\ell_2}(D)/\varepsilon \cdot \mathrm{poly}\log(d/\delta)$.  Meanwhile, we would like the message complexity of each user $i$ to depend on $\|x_i\|_1$, i.e., the number of $1$'s in $x_i$.
Note that the $\ell_2$ error in Theorem~\ref{th:high_dim_sum} can only imply the same $\ell_\infty$ error, namely, it is  $\sqrt{d}$ times larger than desired. Moreover, it requires $d$ messages per user.

\subsection{Clipping on Sparsity }
\label{sec:clipping_sparsity}

If an upper bound of sparsity $S\geq \big(\mathrm{Max}_{\ell_2}(D)\big)^2$ is given,  we can estimate the count for each coordinate independently with the privacy budget $\varepsilon' = \varepsilon/\big(\sqrt{S\log(2/\delta)}\big)$, $\delta' = \delta/(2S)$. Given that each $x_i$ at most affects the counting for $S$ dimensions, with advanced composition, this whole process preserves $(\varepsilon,\delta)$-DP. 
The state-of-the-art protocol for counting under the shuffle-DP model is $\mathrm{BaseSumDP}$ without random rounding, where the communication is improved to $1+(\log(1/\delta)/(\varepsilon n))$ messages per user.  Feeding this into the above protocol for sparse vector aggregation yields an error proportional to $\sqrt{S}$ and $\|x_i\|_1+O(d^{1.5}\log^{1.5}/(\varepsilon n))$ messages per user.  

In the absence of a good upper bound $S$, one could apply the clipping mechanism on sparsity.  Specifically, for some $\tau$, we only retain the first $\tau$ non-zero coordinates of each $x_i$ and set the rest to $0$.  Then we apply the mechanism above with $S=\tau$.  However, as in the sum estimation problem, the key is to choose a good $\tau$ that balances the DP noise and bias, and the optimal $\tau$ should achieve an error proportional to $\Max_{\ell_2}(D)$.  More importantly, we would like to choose $\tau$ and compute the noisy counts of all dimensions clipped by $\tau$ simultaneously in one round. 

\subsection{Sparsity Partitioning}

We use the idea of domain partitioning from our sum estimation protocol.  But for the sparse vector aggregation problem, we partition the domain of possible sparsity levels $[d]$ into $\log d+1$ disjoint sub-domains: $[1,1]$, $[2,2]$, $[3,4]$, $\dots$, $[d/2+1,d]$.  Then, we divide the vectors according to their sparsity. More precisely, for each $j\in\{0,1,2,\dots,\log d\}$, let 
\[D[2^{j-1}+1,2^j] = \big\{x_i\in D:\|x_i\|_1\in [2^{j-1}+1,2^j] \big\}.\]
Since each vector in $D[2^{j-1}+1,2^j]$ has the sparsity bounded by  $2^j$, we can use the idea discussed in the last section.  For the error, the estimation of $\mathrm{Sum}\big(D[2^{j-1}+1,2^j]\big)$ has an $\ell_{\infty}$ error bounded by $\Tilde{O}(\sqrt{2^j})$.  In terms of the communication, since each $x_i$ will only be involved in $D[2^{j-1}+1,2^j]$, each user sends $\|x_i\|_1+\Tilde{O}(d^{1.5}\log^{1.5}(1/\delta)/(\varepsilon n))$ messages in expectation.

Next, let us discuss how to use the estimations of $\mathrm{Sum}(D[2^{j-1}+1,2^j])$ to reconstruct $\mathrm{Sum}(D)$.  Recall that, in sum estimation, we have the estimations for each value domain, i.e., $\mathrm{Sum}(D\cap[2^{j-1}+1,2^j])$ find the last $[2^{j-1}+1,2^j]$ with a large noisy sum result.
This is to guarantee enough elements are located in the domain $[2^{j-1}+1,2^j]$.
Unfortunately, such an idea cannot be extended to the high-dimensional case directly.   
The problem is that, even though, there are a large number of vectors with the sparsity in the range of $[2^{j-1}+1,2^j]$, i.e., $\big|D[2^{j-1}+1,2^j]\big|$ is large enough, each coordinate of $\mathrm{Sum}\big(D[2^{j-1}+1,2^j]\big)$ can still be very small since those vectors can contribute totally different coordinates.

The solution here is that we build an extra counter for the number of vectors with sparsity within each $[2^{j-1}+1,2^j]$ as a judgment for whether to include $\mathrm{Sum}\big(D[2^{j-1}+1,2^j]\big)$ in the final result. 
More precisely, each user first executes $\log d+1$ number of instances of $\mathrm{BaseSumDP}$, each of which is to estimate the number of vectors with sparsity within $[2^{j-1}+1,2^j]$ and uses privacy budget $\varepsilon/2$ and $\delta/2$.  
Then, for each $j\in \{0,1,2,\dots,\log d\}$, we estimate $\mathrm{Sum}\big(D[2^{j-1}+1,2^j]\big)$, where we use one $\mathrm{CounDP}$ with the privacy budget $\varepsilon/\big(2\sqrt{2^{j+1}\log(2/\delta)}\big)$ and $\delta/(2^{j+1})$ to do the sum estimation in $k$th coordinate.  
In the view of the analyzer, with the received messages, we can easily get the estimation for $\mathrm{Count}\big(D[2^{j-1}+1,2^j]\big)$ and $\mathrm{Sum}\big(D[2^{j-1}+1,2^j]\big)$ for each $j\in\{0,1,2,\dots,\log d\}$.
We set $\tau = 2^j$ with the last $j$ such that $\mathrm{Count}\big(D[2^{j-1}+1,2^j]\big)$ is large enough.  Finally, we sum all estimations for $\mathrm{Sum}\big(D[2^{j-1}+1,2^j]\big)$ for $j\leq \log(\tau)$.
The detailed algorithms for the randomizer and analyzer are shown in Algorithms~\ref{alg:sparse_vector_randomizer} and \ref{alg:sparse_vector_analyzer}.

\begin{restatable}{theorem}{thsparsevector}
Given any $\varepsilon>0$, $\delta>0$, $n\in\mathbb{Z}_+$, and for any $D\in \{0,1\}^{n\times d}$, the $\mathrm{SparsVecSumDP}$ achieves the following:
\begin{enumerate}
    \item The messages received by the analyzer preserves $(\varepsilon,\delta)$-DP; 
    \item With probability at least $1-\beta$, for every $k\in [d]$, the $\ell_{\infty}$ error is bounded by 
    \[O\Big(\big(\mathrm{Max}_{\ell_2}(D)\cdot\sqrt{\log(1/\delta)}+\log\log d\big)\cdot\log(d/\beta)/\varepsilon\Big);\]
    \item In expectation, each user sends  $\|x_i\|_1+1+ O\big(d^{1.5}\cdot \log d\cdot \log^{1.5}(1/\delta)/(\varepsilon n)\big)$ messages with each containing $O(\log d)$ bits. 
\end{enumerate}
\label{th:sparse_vector}
\end{restatable}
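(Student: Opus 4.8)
The plan is to establish the three conclusions—privacy, utility, message complexity—separately, following the skeleton of the proof of Theorem~\ref{th:sum} but with an extra layer of per-coordinate bookkeeping and the new ``use the count, not the noisy sums, to pick $\tau$'' idea.

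\textbf{Privacy.} I would treat the $\log d+1$ sparsity classes $D[2^{j-1}+1,2^j]$ as the disjoint sub-domains of parallel composition. Each user's vector $x_i$ lies in exactly one class $j_i$ (determined by $\|x_i\|_1$), where it feeds one count instance of $\mathrm{BaseSumDP}$ (budget $\varepsilon/2,\delta/2$) and at most $2^{j_i}$ of the $d$ coordinate-sum instances (budgets $\varepsilon'_{j_i},\delta'_{j_i}$ as in Algorithm~\ref{alg:sparse_vector_randomizer}). By Lemma~\ref{lm:baseline}(1) each instance is DP with its nominal budget; advanced composition (Lemma~\ref{lm:sequential_composition_dp}) over the $\le 2^{j_i}$ coordinate instances makes them jointly $(\varepsilon/2,\delta/2)$-DP—the factor $\sqrt{2^{j_i+1}\log(2/\delta)}$ inside $\varepsilon'_{j_i}$ is chosen precisely so that this cancels—and basic composition with the count instance gives $(\varepsilon,\delta)$-DP for the class-$j_i$ transcript. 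Parallel composition (Lemma~\ref{lm:parallel_composition_dp}) across the classes then yields $(\varepsilon,\delta)$-DP overall.

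\textbf{Utility.} I would first take two \emph{separate} union bounds over the $\mathrm{BaseSumDP}$ invocations: one over the $O(\log d)$ count instances, where each may fail with probability $O(\beta/\log d)$ so that, by Lemma~\ref{lm:baseline}(2) with $U=1$ and budget $\varepsilon/2$, its error is at most the threshold $1.3\cdot\frac{2}{\varepsilon}\log(2(\log d+1)/\beta)$ used in Algorithm~\ref{alg:sparse_vector_analyzer} (keeping this union small so the threshold stays $O(\log(\log d/\beta)/\varepsilon)$ is essential); and one over the $O(d\log d)$ coordinate-sum instances, failing with probability $O(\beta/(d\log d))$ each, so that the class-$j$ coordinate error is $O(\sqrt{2^j\log(1/\delta)}\,\log(d/\beta)/\varepsilon)$. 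Conditioning on all of this (probability $\ge 1-\beta$), I would split the $\ell_\infty$ error of $\widetilde{\mathrm{Sum}}(D)$ into three pieces. \emph{No overshoot:} an empty class has true count $0$, hence noisy count below the threshold, hence is never selected; so the chosen $\tau=2^{j^*}$ sits in a non-empty class, forcing $\tau\le 2(\mathrm{Max}_{\ell_2}(D))^2$ and $j^*=O(\log\mathrm{Max}_{\ell_2}(D))$. \emph{Bias of the dropped classes:} for every $j>j^*$ the count condition failed, so $\mathrm{Count}(D[2^{j-1}+1,2^j])\le 2\cdot(\text{threshold})=O(\log(d/\beta)/\varepsilon)$, and since entries are $0/1$ every coordinate of $\mathrm{Sum}(D[2^{j-1}+1,2^j])$ is at most this count; since only $O(\log\mathrm{Max}_{\ell_2}(D))$ classes are ever non-empty, $\big\|\sum_{j>j^*}\mathrm{Sum}(D[2^{j-1}+1,2^j])\big\|_\infty=O(\log\mathrm{Max}_{\ell_2}(D)\cdot\log(d/\beta)/\varepsilon)$, which I would absorb into $O((\log\log d+\mathrm{Max}_{\ell_2}(D)\sqrt{\log(1/\delta)})\log(d/\beta)/\varepsilon)$ by casing on whether $\mathrm{Max}_{\ell_2}(D)\le\log d$. \emph{Noise of the kept classes:} summing the per-coordinate errors over $j=0,\dots,j^*$ is a geometric series in $\sqrt{2^j}$ dominated by $\sqrt{2^{j^*}}=O(\mathrm{Max}_{\ell_2}(D))$, giving $O(\mathrm{Max}_{\ell_2}(D)\sqrt{\log(1/\delta)}\,\log(d/\beta)/\varepsilon)$. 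Adding the three pieces coordinate-wise gives the claimed bound.

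\textbf{Communication} is then immediate from Lemma~\ref{lm:baseline}(3): user $i$ sends one real message to its own class's count instance and $\|x_i\|_1$ real messages across that class's coordinate instances (all other real inputs are $0$), for $\|x_i\|_1+1$; the flooding-noise counts, summed over the $O(\log d)$ count instances and the $d(\log d+1)$ coordinate instances—where the class-$j$ per-instance noise scales like $\sqrt{2^j}$, so the sum over $j$ is geometric and dominated by $j=\log d$—yield the stated $O(d^{1.5}\log d\,\log^{1.5}(1/\delta)/(\varepsilon n))$ term, and each message is a class/coordinate index plus an $O(1)$-bit value, i.e.\ $O(\log d)$ bits. \textbf{Main obstacle.} The delicate part is the utility analysis of the selection rule: unlike the 1D case, a populous sparsity class can have every individual coordinate-sum tiny (its vectors hit disjoint coordinates), so $\tau$ must be driven by the count rather than the noisy sums, and one must argue simultaneously that this never overshoots and that the discarded classes contribute a combined $\ell_\infty$ mass of only $O(\log\log d\cdot\log(d/\beta)/\varepsilon)$; getting $\log\log d$ rather than $\log d$ there rests on the small-but-essential fact that at most $O(\log\mathrm{Max}_{\ell_2}(D))$ classes are non-empty together with the case analysis folding $\log\mathrm{Max}_{\ell_2}(D)$ into the final budget, and it also forces the count union bound to range over only $O(\log d)$ instances so the usable noise threshold stays $O(\log(\log d/\beta)/\varepsilon)$. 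The advanced-composition accounting with the $j$-dependent budgets is routine but must be tracked so each class costs exactly $(\varepsilon,\delta)$.
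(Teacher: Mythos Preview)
Your proof is correct and follows essentially the same route as the paper: parallel composition over sparsity classes, advanced composition over the $\le 2^j$ affected coordinate-sum instances within each class together with basic composition with the count instance, and the same count-driven $\tau$-selection followed by the noise/bias split with a geometric sum over kept levels. Your handling of the bias from dropped classes is in fact slightly sharper than the paper's---you count only the $O(\log\mathrm{Max}_{\ell_2}(D))$ non-empty classes and case on whether $\mathrm{Max}_{\ell_2}(D)\le\log d$ to recover the $\log\log d$ term, whereas the paper simply sums the per-level count bound over all $O(\log d)$ levels above $\tau$---but this is a refinement of the same argument, not a different one.
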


\begin{proof}
For privacy, invoking Lemma~\ref{lm:baseline} ensures that for each $j \in \{0,1,2,\dots,\log(d)\}$, $R_{\mathrm{cnt}}^{[2^j-1+2^j]}$ preserves $(\varepsilon/2,\delta/2)$-DP.
Additionally, for each $j\in \{0,1,2,\dots,\log(d)\}$, by combining Lemma~\ref{lm:baseline} with advanced composition and the fact that each $x_i\in D[2^{j-1}+1,2^j]$ affects at most $2^j$ number of $R_{\mathrm{sum}}^{[2^j-1+2^j],k}$, we have that, $\big\{R_{\mathrm{sum}}^{[2^j-1+2^j],k}\big\}_{k\in[d]}$ preserves $(\varepsilon/2,\delta/2)$-DP.  
Given that each $x_i$ impacts exactly one $R_{\mathrm{cnt}}^{[2^j-1+2^j]}$ and one $\big\{R_{\mathrm{sum}}^{[2^j-1+2^j],k}\big\}_{k\in[d]}$, the overall privacy guarantee is achieved.

Concerning utility, Theorem~\ref{lm:baseline} implies that for each $j\in\{0,1,2,\dots,\log d\}$, with probability at least $1-\frac{\beta}{2(\log d +1 )}$,  we have
\begin{align}
\nonumber
& \Big|\mathrm{Count}\big([2^{j-1}+1,2^j]\big)-\widetilde{\mathrm{Count}}\big([2^{j-1}+1,2^j]\big)\Big|
\\
\label{eq:th:sparse_vector_1}
\leq & \frac{2}{\varepsilon} \cdot \log\big(2(\log d+1)/\beta\big),
\end{align}
and the difference in sums, also with probability at least $1-\frac{\beta}{2(\log d +1 )}$, is well bounded:
\begin{align}
\nonumber
& \Big|\mathrm{Sum}\big([2^{j-1}+1,2^j]\big)-\widetilde{\mathrm{Sum}}\big([2^{j-1}+1,2^j]\big)\Big|_{\infty}
\\
\label{eq:th:sparse_vector_2}
= & O\Big(\sqrt{2^j\log(1/\delta)}\cdot \log(d/\beta)/\varepsilon\Big).
\end{align}
Aggregating these probabilities, we ensure both (\ref{eq:th:sparse_vector_1}) and (\ref{eq:th:sparse_vector_2}) hold for all $j$ with probability at least $1-\beta$.

(\ref{eq:th:sparse_vector_1}) implies that
\begin{equation}
\label{eq:th:sparse_vector_3}
j\leq \Big\lceil\log\big(\mathrm{Max}_{\ell_2}(D)\big)\Big\rceil.
\end{equation}
and 
\begin{align}
\nonumber
&\sum_{j=\log\tau+1}^{\log(d)} \mathrm{Count}\big(D[2^{j-1}+1,2^j]\big) 
\\
\nonumber
=& O\Big(\log d \log(\log d/\beta)/\varepsilon\Big),
\end{align}
which sequentially deduces
\begin{align}
\nonumber
 &\Big|\sum_{j=\log\tau+1}^{\log(d)}\mathrm{Sum}\big([2^{j-1}+1,2^j]\big)\Big|_{\infty}
 \\
\label{eq:th:sparse_vector_4}
 =&O\Big(\log d \log(\log d/\beta)/\varepsilon\Big).
\end{align}

Combining (\ref{eq:th:sparse_vector_2}) and (\ref{eq:th:sparse_vector_3}), we have
\begin{align}
\nonumber
& \Big|\sum_{j=0}^{\log\tau}\Big(\mathrm{Sum}\big([2^{j-1}+1,2^j]\big)-\widetilde{\mathrm{Sum}}\big([2^{j-1}+1,2^j]\big)\Big)\Big|_{\infty}
\\
\label{eq:th:sparse_vector_5}
= & O\Big(\mathrm{Max}_{\ell_2}(D)\cdot\sqrt{\log(1/\delta)}\cdot \log(d/\beta)/\varepsilon\Big).
\end{align}

Finally, combining (\ref{eq:th:sparse_vector_4}) and (\ref{eq:th:sparse_vector_5}) leads to our statement for utility.

For communication, recall that each $\mathrm{Baseline}$ without random rounding yields $1+O(\log(1/\delta)/(\varepsilon n))$ messages per user in expectation.  Combing this with facts that each $x_i$ has $|x_i|_1$ number of non-zero coordinates, and there is only one $[2^{j-1}+1,2^j]$ such that $|x_i|\in [2^{j-1}+1,2^j]$, we derive the desired statement.
One special note is that each message requires $O(\log d)$ bits to specify the dimension.
\end{proof}

\section{Practical Optimizations}
\label{sec:optimization}

In this section, we briefly discuss some practical optimizations for our protocols, although they do not affect the asymptotic results.

As mentioned, for sum estimation protocol of \cite{balle2020private} attains an error very closely to that of \cite{ghazi2021differentially}.
Meanwhile, \cite{balle2020private} send $O(1)$ messages per user while  \cite{ghazi2021differentially} sends $1+o(1)$ messages.  Although the former is asymptotically smaller, the $o(1)$ term, or $O\big(\log^2(n)\cdot\log(1/\delta)/(\varepsilon\sqrt{n})\big)$ to be more precise, is actually not negligible for $n$ not too large. 
Since our mechanism uses sum estimation as a black box, in our implementation we choose either \cite{ghazi2021differentially} or \cite{balle2020private} based on the concrete values of $n,\varepsilon,\delta$.

Furthermore, recall that in $\mathrm{SumDP}$, we invoke $\log(U)+1$ instances of $\mathrm{BaselineSumDP}$, corresponding to different domain sizes $1,2,4,\dots,U$.
We note that using the protocol of \cite{ghazi2021differentially} without random rounding yields a message number of $1+O(U\log^2(U)\log(U/\delta)/(n\varepsilon))$, which may be better than doing a random rounding when the domain size is small.
Therefore, for different domain sizes, we adopt different baselines: \cite{ghazi2021differentially} with or without random rounding or \cite{balle2020private}. We again choose the best one based on the concrete values of  $n$, $\varepsilon$, $\delta$, and domain size. 

\section{Experiments}

In addition to the improved asymptotic results, we have also conducted experiments comparing our protocols with the previous algorithms.

\textbf{Sum estimation:}
Our $\mathrm{SumDP}$ mechanism was evaluated alongside two baselines: $\mathrm{GKMPS}$ \cite{ghazi2021differentially} and $\mathrm{BBGN}$ \cite{balle2020private}. 
We also compared its error to the state-of-the-art central-DP mechanism~\cite{dong2023universal} as a gold standard.
The two-round protocol from \cite{huang2021instance} is solely a theoretical result. It not only has a large message number and errors but also has an impractical running time. More precisely, \cite{huang2021instance} applies the method from \cite{ghazi2021power} to approximate $\mathrm{Max}(D)$ to obtain a clipping threshold $\tau$.    
In \cite{ghazi2021power}, each randomizer has a computation of $O(n\log^2 U)$. 
Additionally, both \cite{huang2021instance} and \cite{ghazi2021power} only present their theoretical results without any concrete implementation.

\textbf{High dimensional sum:}
For high-dimensional sum estimation, our $\mathrm{HighDimSumDP}$ was compared against the one-round protocol $\mathrm{HLY}$ proposed in \cite{huang2021instance}. Similar to sum estimation, the two-round protocol from \cite{huang2021instance} faced efficiency challenges, as outlined before.  For this problem, we use the central-DP mechanism in \cite{dong2023better} as the gold standard.

\textbf{Sparse vector aggregation:}
For sparse vector aggregation, we assessed $\mathrm{SparVecSumDP}$ against $\mathrm{NaiveVecSumDP}$, which uses the dimension $d$ as the upper bound for sparsity.  Here, we also use the central-DP mechanism in \cite{dong2023better} as the gold standard.

\subsection{Setup}

\textit{Datasets.} We used both synthetic and real-world datasets in the experiments.  For sum estimation, the synthetic data was generated from two families of distributions over $[U]$ with $n = U= 10^5$: 
Zipf distribution $f(x)\propto  (x+a)^{-b}$ with $a = 1$, $b = 3$ and $a = 1$, $b = 5$; Gauss distribution with $\mu = 5$, $\sigma = 5$ and $\mu = 50$, $\sigma = 50$.  The real-world datasets were collected from Kaggle, including San-Francisco-Salary (SF-Sal)~\cite{sf_salaries_2011_2014}, Ontario-Salary (Ont-Sal)~\cite{ontario_dataset_2020}, Brazil-Salary (BR-Sal)~\cite{brazil_public_worker_salary_2020}, and Japan-trade (JP-Trad)~\cite{million_data_csv_2020}.
SF-Sal, BR-Sal, and Ont-Sal are salary data from San Francisco, Brazil, and Ontario for the years 2014, 2020, and 2020, respectively, with amounts presented in thousands of US dollars (K USD).
For the salary data, we set the domain limit $U$ to $2.5\times 10^5$, which is the world's highest recorded salary.  
The JP-Trad dataset, capturing Japan's trade statistics from 1988 to 2019, includes 100 million entries. 
We selected a subset of approximately 200,000 tuples, covering Japan's trade activities with a designated country.  We set $U$ as the maximum value across the entire dataset.  This dataset also has the amounts expressed in K USD.
The details of these real-world data can be found in Table~\ref{tab:real-world-datasets}.

\begin{table}[h]
    \centering
    \resizebox{0.5\linewidth}{!} 
    {
	\renewcommand\arraystretch{1.25}
    \begin{tabular}{c|c|c|c}
        \hline
        Dataset & $n$ & $U$& $\mathrm{Max}(D)$
        \\
        \hline
        \hline
        SF-Sal & $1.49\times 10^5$ & $2.5\times 10^5$ & $568$
        \\
        \hline
       Ont-Sal &$5.75\times 10^5$ & $2.5\times 10^5$ & $1750$
       \\
        \hline
         BR-Sal & $1.09\times 10^6$ & $2.5\times 10^5$ & $343$
        \\
        \hline
        JP-Trad & $1.81\times 10^5$ & $2\times 10^5$ & $2810$
        \\
    \hline
    \end{tabular}
    }
    \caption{Real-world datasets used in sum aggregation}
    \label{tab:real-world-datasets}
\end{table}

For high-dimensional sum estimation, we utilized the MNIST dataset~\cite{mnist_kaggle_2020}, comprising 70,000 digit images, with each
represented by a vector of dimension $d = 28\times 28 = 784$.  The $U_{\ell_2}$ parameter was set to $2^{10}$ according to \cite{huang2021instance}.

The sparse vector aggregation experiments were conducted using the AOL-user-ct-collection (AOL)~\cite{aol}, documenting 500,000 users' clicks on 1,600,000 URLs.
we consolidated every 100 webpages into a single dimension, resulting in a dimensionality of $1.6 \times 10^4$, and selected the first 50,000 users as our testing dataset.

\textit{Experimental parameters.} All experiments are conducted on a Linux server equipped with a 24-core 48-thread 2.2GHz Intel Xeon CPU and 256GB memory.  We used absolute error, $\ell_2$ error, and $\ell_{\infty}$ error metrics for sum estimation, high-dimensional sum, and sparse vector aggregation respectively. 
We repeated each experiment 50 times, discarding the 10 largest and smallest errors for an averaged result from the remaining 30.
The message complexity was quantified by the average number of messages per user, with each message containing $O(\log(d) + \log(U) + \log(n))$ bits. 
For the privacy budget, we used $\varepsilon = 0.2, 1, 5$, and the default value was set to $1$. 
To protect data privacy, $\delta$ should be set to a value significantly smaller than the inverse of the data size. Therefore, $\delta$ was fixed at $10^{-12}$ in our experiments.
\footnote{Notably, for our mechanism, a larger $\delta$ will not affect error but will benefit the communication, albeit minimally as it affects only the logarithmic term.}
The failure probability $\beta$ was set at 0.1.

\begin{table*}
    \centering
    \resizebox{1\linewidth}{!} 
    {
	\renewcommand\arraystretch{1.15}
	\begin{tabular}{c|c|c||c|c|c|c||c|c|c|c}
        \hline
        \multicolumn{3}{c||}{\multirow{4}{*}{Dataset}} &  \multicolumn{4}{c||}{Simulated Data} & \multicolumn{4}{c}{Real-world Data} 
        \\
        \cline{4-11}
        \multicolumn{3}{c||}{} & \multicolumn{2}{c|}{Zipf} & \multicolumn{2}{c||}{Gauss} & \multirow{3}{*}{SF-Sa} & \multirow{3}{*}{Ont-Sa}  &\multirow{3}{*}{BR-Sa}  & \multirow{3}{*}{JP-Trad} 
        \\
        \cline{4-7}
        \multicolumn{3}{c||}{} & $a=1$  & $a=1$ & $\mu=5$ & $\mu=50$ & & & &
        \\
        \multicolumn{3}{c||}{} & $b=3$ & $b=5$ & $\sigma=5$ & $\sigma=50$ & & & &
        \\
        \hline
        \multirow{8}{*}{\makecell[c]{$1$-D \\ Sum}}& 
        \multicolumn{2}{c||}{\multirow{2}{*}{\makecell[c]{SOTA under \\central-DP RE(\%)}}} & \multirow{2}{*}{{0.53}}&\multirow{2}{*}{{0.0247}} & \multirow{2}{*}{{0.00921}}& \multirow{2}{*}{{0.00737}}& \multirow{2}{*}{{0.00936}}& \multirow{2}{*}{{0.0028}}& \multirow{2}{*}{{0.0452}}&\multirow{2}{*}{{0.168}}
        \\
        &\multicolumn{2}{c||}{} & & & & & & & &
        \\
        \cline{2-11}
        & \multirow{2}{*}{SumDP   (Ours)} & RE(\%) & \textbf{{1.13}} & \textbf{{0.0661}} &  \textbf{{0.00351}} & \textbf{{0.00452}} & \textbf{{0.00989}} & \textbf{{0.0075}} & \textbf{{0.0249}} & \textbf{{0.101}}
        \\
        & \multirow{2}{*}{} & \#Messages/user& {140} & {140} & {139} & {139} & {143} & {126} & {119} & {140}
        
        \\
        \cline{2-11}
        & \multirow{2}{*}{GKMPS} & RE(\%) & {54.5} & {96.3} & {22.5} & {3.11} & {2.44} & {0.372} & {9.26} & {9.02}
        \\
        & \multirow{2}{*}{} & \#Messages/user &{14200} & {14300} & {14300} & {14300} & {12200}& {7770} & {5950} & {11200}
        
        \\
        \cline{2-11}
        & \multirow{2}{*}{BBGN} & RE(\%) &{53} &{76.6} & {17.6} & {1.43} &  {1.96} & {0.294} & {3.53} & {4.46}
        \\
        & \multirow{2}{*}{} & \#Messages/user &\textbf{{9}} &\textbf{{9}} & \textbf{{9}}&\textbf{{9}} & \textbf{{9}} & \textbf{{8}} & \textbf{{8}} & \textbf{{9}}
        \\
        \hline
        \end{tabular}
    }
    \caption{Comparison among sum estimation mechanisms under shuffle-DP ($\varepsilon=1$). RE denotes the relative error.}
    \label{tab:results}
\end{table*}

\subsection{Experimental Results for Sum Aggregation}

\begin{figure*}[t]
\centering
\includegraphics[width=1\textwidth]{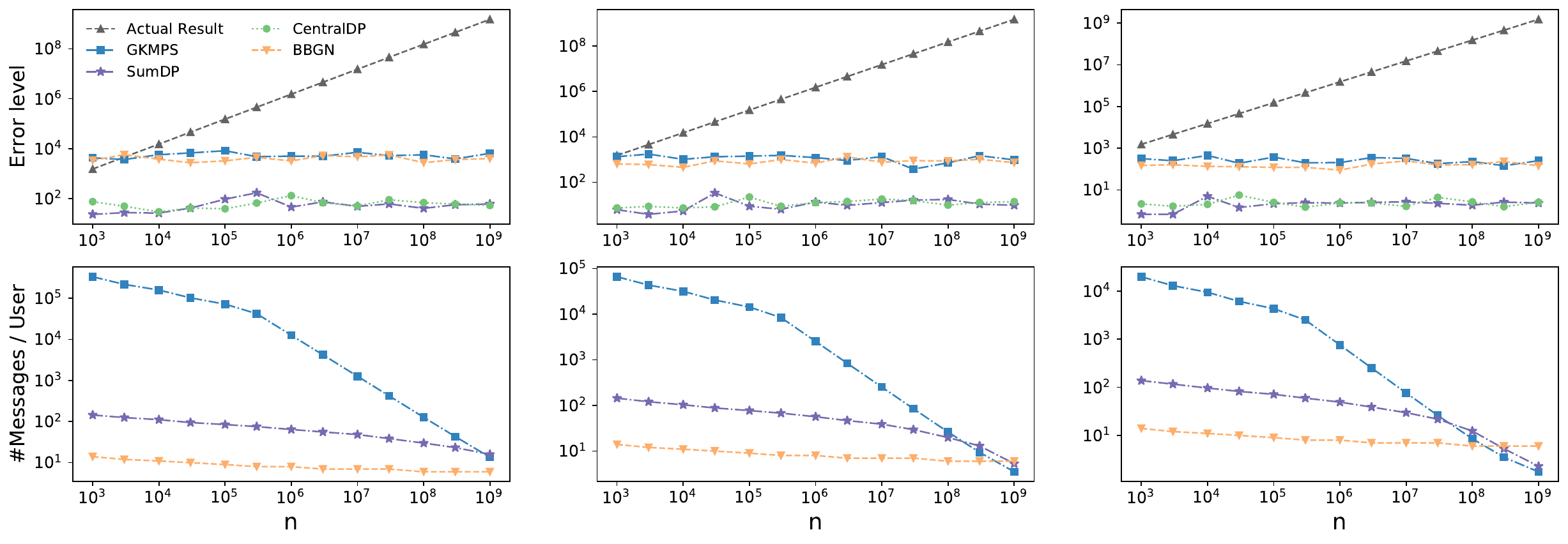}
    \caption{Error levels and average messages per user for the sum estimation mechanisms under shuffle-DP with different data size $n$. $\mathrm{CentralDP}$ represents the state-of-the-art algorithm for sum estimation under central-DP.}
    \label{fig:datasize}
\end{figure*}

\textit{Utility and communication.} 
Table~\ref{tab:results} shows the errors and the average number of messages per user across various mechanisms for sum estimation under shuffle-DP over both simulated and real-world data.
The results indicate a clear superiority of $\mathrm{SumDP}$ in terms of utility.  
$\mathrm{SumDP}$ consistently maintains an error below $2\%$ across all eight tests, further reducing it to below $0.2\%$ in seven cases.
In contrast, $\mathrm{GKMPS}$ and $\mathrm{BBGN}$ exhibit significantly higher error levels.
Our improvement over $\mathrm{GKMPS}$ and $\mathrm{BBGN}$ can be up to $3000\times$.  
This superiority is particularly evident in the JP-Trad dataset, where $\mathrm{SumDP}$ surpasses $\mathrm{GKMPS}$ and $\mathrm{BBGN}$ by more than $40\times$ even with a pre-established $U$ based on strong prior knowledge.
This validates our theoretical analysis:
$\mathrm{SumDP}$ achieves an instance-specific error, unlike $\mathrm{GKMPS}$ and $\mathrm{BBGN}$, which target worst-case errors. 
Furthermore, we observe that $\mathrm{SumDP}$ attains error levels similar to the gold standard, and produces even smaller errors in about half of the cases.
This is because while the two methods have the same asymptotic error bounds, they are both upper bounds that may not be tight (in constant factors) on all instances. Therefore, the actual error of either mechanism could be smaller than the other.

In terms of communication, neither $\mathrm{SumDP}$ nor $\mathrm{GKMPS}$ achieves the theoretical ideal of single-message communication per user in all tests.  In contrast, $\mathrm{BBGN}$ requires fewer messages.
This is because, even though $\mathrm{SumDP}$ and $\mathrm{GKMPS}$ theoretically reach $1+o(1)$ messages per user, the term $o(1)$ masks substantial logarithmic factors, leading to significantly higher actual message counts, especially when $n$ is small. 
In contrast, $\mathrm{BBGN}$ maintains constant messages per user.
Later, we will show that as $n$ increases, both $\mathrm{GKMPS}$ and $\mathrm{SumDP}$ exhibit a trend towards achieving a single-message communication per user.
Additionally, $\mathrm{SumDP}$ requires much fewer messages than $\mathrm{GKMPS}$.  This is attributed to the optimization described in Section~\ref{sec:optimization}, where our mechanism intelligently chooses the more communication-efficient method between $\mathrm{GKMPS}$ and $\mathrm{BBGN}$.

\begin{figure*}[t]
\centering
\includegraphics[width=1\textwidth]{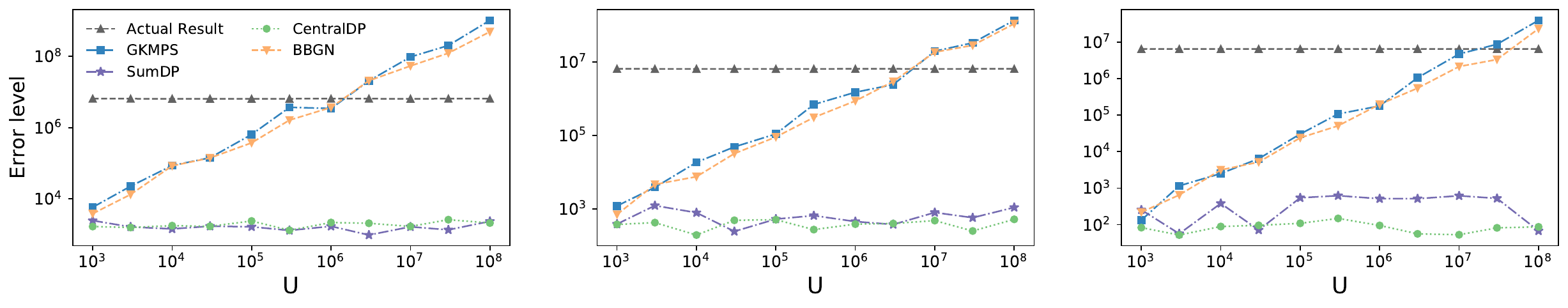}
    \caption{Error levels of the mechanisms for sum estimation under shuffle-DP with different value domain $U$.}
    \label{fig:domain}
\end{figure*}

\begin{figure*}[t]
\centering
\includegraphics[width=1\textwidth]{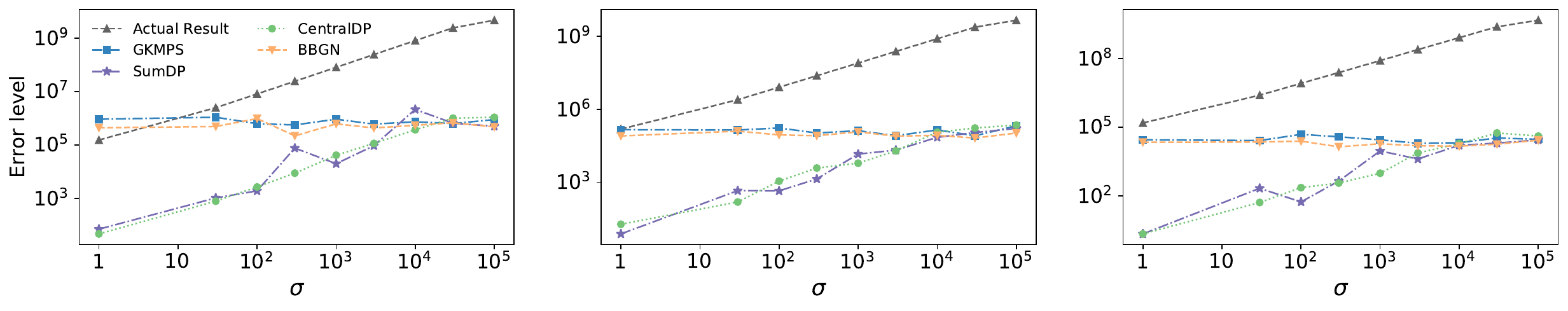}
    \caption{Error levels of the mechanisms for sum estimation under shuffle-DP with data drawn from a Gaussian distribution with different $\sigma$.}
    \label{fig:skewness}
\end{figure*}

\textit{Data size.} 
To assess the impact of varying data sizes, we conducted experiments using simulated data generated from a Gaussian distribution with $\mu = 1$, $\sigma = 1$, and domain size $U = 10^3$.
The data size varied from $10^3$ to $10^9$, and we tested with different privacy budgets $\varepsilon=0.2,1,5$.
The error levels and average messages per user are depicted in Figure~\ref{fig:datasize}.  Note that in all our figures, both axes are in log-scale and the actual query results are plotted alongside the error levels to provide a benchmark for assessing the utility of the mechanisms.
In terms of utility, $\mathrm{SumDP}$ consistently has a high utility even with small $n$ and $\varepsilon$, akin to the state-of-the-art central-DP mechanism.
Notably, the error levels for all mechanisms did not exhibit significant changes with varying $n$, matching our analytical analyses that the errors in $\mathrm{BBGN}$ and $\mathrm{GKMPS}$ are dependent on $U$, while the errors in $\mathrm{SumDP}$ and the central-DP mechanism depend on $\mathrm{Max}(D)$, all of which are not directly influenced by the data size.

Regarding communication, $\mathrm{GKMPS}$ showed a decrease in the average messages per user with larger $n$ values, while $\mathrm{BBGN}$ maintained a constant message complexity.
$\mathrm{SumDP}$ displayed a unique trend: it maintained its message complexity for smaller $n$ values, then gradually decreased it as $n$ increased. 
This pattern is attributed to $\mathrm{SumDP}$ initially leveraging $\mathrm{BBGN}$ for smaller datasets and then transitioning to $\mathrm{GKMPS}$ for larger datasets.
Moreover, as $n$ increases, both $\mathrm{GKMPS}$ and $\mathrm{SumDP}$ demonstrate a progression towards single-message communication per user, aligning with our theoretical analysis that both mechanisms achieve $1+o(1)$ messages per user.

\textit{Domain size and data skewness.}
To investigate the influence of domain size $U$ and data skewness on the error level of various mechanisms, we utilized simulated data with $n=10^5$, drawn from a Gaussian distribution with $\mu = 50$, $\sigma = 50$, and a variable domain size $U$ ranging from $10^3$ to $10^8$ and a Gaussian distribution with $\mu = 1$, $U=10^5$, and a varying $\sigma$ from $1$ to $10^5$. The results are plotted in Figure~\ref{fig:domain} and~\ref{fig:skewness}.
The first message the figures convey is nothing but a reconfirmation that $\mathrm{GKMPS}$ and $\mathrm{BBGN}$ have error proportional to $U$ and $\mathrm{DPSum}$ have the error proportional to the actual maximum value in the dataset. Moreover, Figure~\ref{fig:skewness} shows in the cases when the actual maximum equals to $U$, $\mathrm{DPSum}$ achieves an error level nearly the same as the $\mathrm{GKMPS}$ and $\mathrm{BBGN}$.

\begin{table*}
\centering
\resizebox{0.75\linewidth}{!} 
{
	\renewcommand\arraystretch{1.15}
	\begin{tabular}{c|c|c|c|c}
        \hline
       STOTA under & \multicolumn{2}{c|}{$\mathrm{HighDimSumDP}$} &  \multicolumn{2}{c}{$\mathrm{HLY}$}
        \\
        \cline{2-5}
        central-DP RE(\%) & RE(\%) & \#Messages/user & RE(\%) & \#Messages/user
        \\
        \hline
        {0.328} & {6.43} & {446,000} & {1070} & {986,000,000}
        \\
        \hline
        \end{tabular}
}
\caption{Comparison among mechanisms for high-dimensional sum estimation under shuffle-DP ($\varepsilon=5$). RE denotes the relative error, which is $\ell_2$ Error/$\|\mathrm{Sum}(D)\|_2$.}
\label{tab:results_high}
\end{table*}

\begin{table*}
\centering
\resizebox{0.75\linewidth}{!} 
{
	\renewcommand\arraystretch{1.15}
	\begin{tabular}{c|c|c|c|c}
        \hline
       STOTA under & \multicolumn{2}{c|}{$\mathrm{SparVecSumDP}$} &  \multicolumn{2}{c}{$\mathrm{NaiveVecSumDP}$}
        \\
        \cline{2-5}
        central-DP RE(\%) & RE(\%) & \#Messages/user & RE(\%) & \#Messages/user
        \\
        \hline
        {3.2} & {4.4} & {4,970,000} & {21} & {1,450,000}
        \\
        \hline
        \end{tabular}
}
\caption{Comparison among protocols for sparse vector aggregation under shuffle-DP ($\varepsilon=5$).  RE denotes the relative error, which is $\ell_{\infty}$ Error/$|\mathrm{Sum}(D)|_{\infty}$. }
\label{tab:results_sparse}
\end{table*}

\subsection{Experimental Results for High-dimensional Sum and Sparse Vector Aggregation}

The results for high-dimensional sum and sparse vector aggregation are shown in table~\ref{tab:results_high} and~\ref{tab:results_sparse} respectively. 
In these experiments, we opted for a higher privacy budget, setting $\varepsilon$ to 5.
That is because answering high-dimensional queries is much more complex and we need larger $\varepsilon$ to guarantee the utility~\cite{huang2021instance}.  
First, similar to the results of sum estimation, our mechanisms for high-dimensional and sparse vector aggregation yield an error level, that is close to the state-of-the-art mechanisms under central-DP and is much smaller than the prior works with worst-case optimal error.  Second, our mechanisms achieve a similar or even better communication than prior works.

\section{Conclusion}
In this paper, we study answering sum estimation under the shuffle-DP model, where prior works either only achieve worst-case optimal error or have very a heavy communication cost. We introduce the first protocol that not only has instance-optimal error but also achieves optimal communication efficiency, i.e., requiring only $1+o(1)$ messages per user. 
Furthermore, we successfully extend our technique to address high-dimensional sum estimation and sparse vector aggregation.
Finally, we would like to mention two interesting directions for future research. The first is how to extend our domain division technique to private sum estimation in various models, such as the multi-party secure computation model.
Besides, since the private summation is the foundation to private protocols for various machine learning models, investigating its potential to enhance utility in these advanced tasks would also be valuable.

\section*{Acknowledgements}
This work has been in part supported by a grant from ONR, a grant from the DARPA SIEVE program under a subcontract from SRI, a Packard Fellowship, and contributions from Intel, Bosch, and Cisco. Additionally, this work has been funded by NSF awards under grant numbers 2128519, 2044679, 2338772, and 2148359.
Qiyao Luo and Ke Yi have been supported by HKRGC under grants 16205420, 16205422, and 16204223.

\bibliographystyle{alpha}
\bibliography{main}
\end{document}